\documentclass[12pt]{article}

\usepackage{geometry}

\usepackage{hyperref}
\usepackage{inputenc}
\hypersetup{pdfencoding=unicode}
\inputencoding{utf8}
\makeatother

\topmargin=-0.6in
\evensidemargin=0in
\oddsidemargin=0in
\textwidth=6.5in
\textheight=9.05in
\headsep=0.25in
\headheight=14.5pt

\usepackage{inputenc}
\usepackage[english]{babel}

\usepackage{amsfonts}
\usepackage{amsthm}
\usepackage{amsmath}
\usepackage[nottoc]{tocbibind}
\usepackage{graphicx}

\usepackage{float}
\usepackage{wrapfig}

\usepackage{lipsum}
\linespread{1.2}

\usepackage{tikz}
\usetikzlibrary{calc}

\graphicspath{{Pictures/}}

\newcommand{\tab}{\hspace*{2em}}
\providecommand{\tabularnewline}{\\}


\newlength{\localh}
\newlength{\locald}
\newbox\mybox
\def\mp#1#2{\setbox\mybox\hbox{#2}\localh\ht\mybox\locald\dp\mybox\addtolength{\localh}{-\locald}\raisebox{-#1\localh}{\box\mybox}}
\def\m#1{\scalebox{1}{\mp{0.5}{#1}}}


\newenvironment{qcircuit}{%
  \begingroup%
  \def\dot##1{\fill (##1) circle (.15);}%
  \def\triangle##1{\fill (##1)+(.3,0) --
    +(-.15,.25) -- +(-.15,-.25) -- cycle;}%
  \def\triangleadj##1{\fill (##1)+(-.3,0) --
    +(.15,.25) -- +(.15,-.25) -- cycle;}%
  \def\notgate##1{\filldraw[fill=white,thick] (##1) circle (.3);
                  \draw[thick] (##1)+(.3,0) -- +(-.3,0);
                  \draw[thick] (##1)+(0,.3) -- +(0,-.3);
                }%
  \def\controlled{\gencontrolled{\dot}}%
  \def\gencontrolled##1##2##3##4{\foreach\x in {##4} {
                    \draw[thick] (##3 |- 0,\x) -- (##3);
                  }
                  ##2{##3};
                  \foreach\x in {##4} {
                    ##1{##3 |- 0,\x};
                  }}%
  \def\gate##1##2{\biggate{##1}{##2}{##2}}%
  \def\biggate##1##2##3{\filldraw[fill=white,thick] (##2)+(-.45,-.4)
    rectangle ($(##3)+(.45,.4)$); \draw ($.5*(##2)+.5*(##3)$) node
    {\small ##1};}%
  \def\widebiggate##1##2##3##4{\filldraw[fill=white,thick] (##2)+(-1*##4,-.4)
    rectangle ($(##3)+(##4,.4)$); \draw ($.5*(##2)+.5*(##3)$) node
    {\small ##1};}%
  \def\widegate##1##2##3{\widebiggate{##1}{##3}{##3}{##2}}
  \def\gridx##1##2##3{\foreach\x in {##3} {
        \draw[thick] (##1,\x) -- (##2,\x);
      }}%
  \def\grid{\gridx{0}}%
  \def\cross##1{\draw (##1)+(-.3,-.3) -- +(.3,.3);
             \draw (##1)+(.3,-.3) -- +(-.3,.3);}%
  \def\bar##1{\fill (##1)+(-.05,-.4) rectangle ($(##1)+(.05,.4)$);}%
  \def\init##1##2{\bar{##2}; \leftlabel{##1}{##2};}
  \def\term##1##2{\bar{##2}; \rightlabel{##1}{##2};}
  \def\leftlabel##1##2{\draw (##2) node[left] {##1};}%
  \def\rightlabel##1##2{\draw (##2) node[right] {##1};}%
  \def\wirelabel##1##2{\draw (##2)+(0,-.15) node[above] {\footnotesize ##1};}%
  \def\multiwire##1{\draw (##1)+(-.1,-.3) -- +(.1,.3);}%
  \begin{tikzpicture}}
  {\end{tikzpicture}\endgroup}

\begin{document}

\begin{titlepage}

\newcommand{\HRule}{\rule{\linewidth}{0.5mm}}

\center

\textsc{\LARGE Dalhousie University}\\[1.5cm]
\textsc{\Large Honours Thesis}\\[0.5cm]

\HRule \\[0.4cm]
{ \huge \bfseries The Exact Synthesis of $1$- and $2$-Qubit Clifford+$T$ Circuits}\\[0.4cm]
\HRule \\[1cm]

\begin{minipage}{0.4\textwidth}
\begin{flushleft} \large
\emph{Author:}\\
Travis \textsc{Russell}
\end{flushleft}
\end{minipage}
~
\begin{minipage}{0.4\textwidth}
\begin{flushright} \large
\emph{Supervisor:} \\
Dr. Peter \textsc{Selinger}
\end{flushright}
\end{minipage}\\[1.5cm]

{\large May 2, 2014}\\[1.5cm]

\begin{abstract}
We describe a new method for the decomposition of an arbitrary $n$ qubit operator with entries in $\mathbb{Z}[i,\frac{1}{\sqrt{2}}]$, i.e., of the form $\frac{a+b\sqrt{2}+i(c+d\sqrt{2})}{\sqrt{2}^{k}}$, into Clifford+$T$ operators where $n\le 2$. This method achieves a bound of $O(k)$ gates using at most one ancilla using decomposition into $1$- and $2$-level matrices which was first proposed by Giles and Selinger in [2].
\end{abstract}

\vfill

\end{titlepage}

\tableofcontents

\newpage

\section{Introduction}

\subsection{Exact synthesis of Clifford+$T$ circuits}
In quantum information theory, the decomposition of unitary operators into gates from some fixed universal set is an important problem. Depending on the operator, this can be done either exactly, which is known as \textit{exact synthesis}, or approximately up to some given accuracy $\epsilon$, which is known as \textit{approximate synthesis}.\\
\tab In this paper we focus on the problem of exact synthesis for $1$- and $2$-qubit operators using the Clifford+$T$ gate set. It is known that the Clifford group on $n$ qubits, generated by the Hadamard gate $H$, the phase gate $S$, the controlled-not gate, and the scalar $\omega=e^{\frac{i\pi}{4}}=\frac{1+i}{\sqrt{2}}$, along with the non-Clifford operator $T$, forms a universal gate set [1].
\[
\omega=e^{\frac{i\pi}{4}}\quad H=\frac{1}{\sqrt{2}}\begin{pmatrix}1 & 1\\
1 & -1
\end{pmatrix}\quad S=\begin{pmatrix}1 & 0\\
0 & i
\end{pmatrix}
\]
\[
CNOT=\begin{pmatrix}1 & 0 & 0 & 0\\
0 & 1 & 0 & 0\\
0 & 0 & 0 & 1\\
0 & 0 & 1 & 0
\end{pmatrix}\quad T=\begin{pmatrix}1 & 0\\
0 & \omega
\end{pmatrix}
\]

\subsection{Related work}
Recently, there has been some interest in finding an efficient algorithm for the exact synthesis of $n$-qubit operators. Giles and Selinger first presented an algorithm for the exact synthesis of $n$-qubit operators with entries in the ring $\mathbb{Z}[\frac{1}{\sqrt{2}},i]$ using a method involving decomposition into $1$- and $2$-level matrices, in which one ancilla is sufficient [2]. If the matrix entries are of the form
$\frac{a+b\sqrt{2}+i(c+d\sqrt{2})}{\sqrt{2}^k}$, then this algorithm
achieves an upper bound of $O(3^{2^{n}}nk)$ gates, which is far from optimal.\\
\tab The problem with this algorithm was that the exponent in the denominator of the remaining entries of the matrix may increase after applying the $1$- and $2$-level matrices, and this increases the number of operations needed to reduce the matrix. Kliuchnikov then introduced an algorithm which achieved an upper bound of $O(4^{n}nk)$ gates, using a different method requiring at most two ancillas [3].\\
\tab With the revelation that this efficient bound was possible, this paper presents an algorithm that achieves an efficient bound for the $1$- and $2$-qubit cases while using the original method of decomposition into $1$- and $2$-level operators in which one ancilla is sufficient.

\section{Some algebra}

We will begin by defining some notation and terminology.


\subsection{Some rings} 

Recall that $\mathbb{N}$ is the set of all natural numbers including $0$ and
$\mathbb{Z}$ is the ring of integers. Let $\omega=e^{\frac{i\pi}{4}}=\frac{(1+i)}{\sqrt{2}}$.
Let $\mathbb{D}$ be the ring of \textit{dyadic fractions}, defined as $\mathbb{D}=\{\frac{a}{2^{n}}\mid a\in\mathbb{Z},\:n\:\epsilon\:\mathbb{N}\}$.

\newtheorem{mydef1}{Definition}
\begin{mydef1}
\rm{$\mathbb{Z}[\omega]=\{a\omega^{3}+b\omega^{2}+c\omega+d\mid a,\:b,\:c,\:d\:\epsilon\:\mathbb{Z}\}$  and $\mathbb{D}[\omega]=\{a\omega^{3}+b\omega^{2}+c\omega+d\mid a,\:b,\:c,\:d\:\epsilon\:\mathbb{D}\}$
are subrings of the complex numbers and have addition and multiplication
defined by the ring axioms as well as the property $\omega^{4}=-1$.}
\end{mydef1}
We note that $\mathbb{Z}[\frac{1}{\sqrt{2}},i]=\mathbb{D}[\omega]$.


\subsection{Conjugate and norm} 

\newtheorem{defconj}[mydef1]{Definition}
\begin{defconj}
\rm{Because $\omega$ is a primitive $8^{th}$ root of unity, $\mathbb{Z}[\omega]$
has $\phi(8)=4$ automorphisms. One such automorphism is the usual
complex conjugation which maps $i$ to $-i$ and $\sqrt{2}$ to itself.
We will denote complex conjugation by $(-)^{\dagger}$. For any element
in $\mathbb{Z}[\omega]$, we have
\[
(a\omega^{3}+b\omega^{2}+c\omega+d)^{\dagger}=-c\omega^{3}+b\omega^{2}-a\omega+d.
\]
Another automorphism is $\sqrt{2}$-conjugation, which maps $i$ to
itself and $\sqrt{2}$ to $-\sqrt{2}$. We will denote $\sqrt{2}$-conjugation
by $(-)^{\bullet}$. For any element in $\mathbb{Z}[\omega]$, we
have
\[
(a\omega^{3}+b\omega^{2}+c\omega+d)^{\dagger}=-a\omega^{3}+b\omega^{2}-c\omega+d.
\]
$ $The remaining two automorphisms are obviously the identity function
and $(-)^{\dagger\bullet}=(-)^{\bullet\dagger}$.}
\end{defconj}

\newtheorem{defnorm}[mydef1]{Definition}
\begin{defnorm}
\rm{We define a ring norm for $\mathbb{Z}[\omega]$ and $\mathbb{D}[\omega]$. It is given by the following formula for $t=a\omega^{3}+b\omega^{2}+c\omega+d$:
\[
\mathcal{N}(t)=tt^{\dagger}t^{\bullet}t^{\dagger\bullet}=(a^{2}+b^{2}+c^{2}+d^{2})^{2}-2(cd+bc+ab-da)^{2}
\]}
\end{defnorm}


\subsection{Euclidean domains} 

\newtheorem{remarkdomain}{Remark}
\begin{remarkdomain}
\rm{We note that $\mathbb{Z}[\omega]$ is a Euclidean domain, with a Euclidean function given by $\left|\mathcal{N}(t)\right|$. }
\end{remarkdomain}

As usual, we write $t|s$ if $t$ divides $s$, i.e., if there exists an $r$ such that $rt=s$. If $r$ is a unit of the ring, then we say $t$ and $s$ are \textit{associates} and we denote this by $t\sim s$. This means that $t|s$ and $s|t$.

\newtheorem{remark1}[remarkdomain]{Remark}
\begin{remark1}
\rm{Let $\delta=1+\omega\in\mathbb{Z}[\omega]$. Then $\delta^{2}\sim\sqrt{2}$ and $\delta^{2k}\sim\sqrt{2}^{k}$. This implies that any $t\in\mathbb{Z}[\omega]$ can be written as $ $$t=\frac{a\omega^{3}+b\omega^{2}+c\omega+d}{\delta^{k}}$ with $a,\: b,\: c,\: d\in\mathbb{Z}$,
$k\in\mathbb{N}$.}

\end{remark1}

\subsection{Quotient mappings}

\newtheorem{mydef2}[mydef1]{Definition}

\begin{mydef2}
\rm{Let $n\geq0$. Recall that $(\delta^{n})$ is the ideal generated by $\delta^{n}$, i.e., $(\delta^n)$ is the set of all $t\in\mathbb{Z}[\omega]$ that are divisible by $\delta^n$. Let $\rho_{n}:\mathbb{Z}[\omega]\rightarrow\mathbb{Z}[\omega]/\left(\delta^{n}\right)$
be the standard quotient mapping $x\mapsto x+\left(\delta^{n}\right)$.}\end{mydef2}

We note that the elements of $\mathbb{Z}[\omega]/\left(\delta^{n}\right)$ are the the equivalence classes of elements of $\mathbb{Z}[\omega]\pmod{\delta^{n}}$.

\newtheorem{remark2}[remarkdomain]{Remark}
\begin{remark2}
\rm{For $n\geq0$, $\mathbb{Z}[\omega]/\left(\delta^{n}\right)$ has $2^{n}$ elements. For example,
\[
\mathbb{Z}[\omega]/\left(\delta\right)=\{0,\:1\},
\]
\[
\mathbb{Z}[\omega]/\left(\delta^{2}\right)=\{0,\:1,\:\omega,\:1+\omega\},
\]
\[
\mathbb{Z}[\omega]/\left(\delta^{3}\right)=\{0,\:1,\:\omega,\:\omega^{2},\:\omega^{3},\:1+\omega,\:1+\omega^{2},\:1+\omega^{3}\}.
\]}
\end{remark2}

\subsection{Denominator exponents}
\newtheorem{mydef3}[mydef1]{Definition}
\begin{mydef3}
\rm{Let $t\in\mathbb{D}[\omega]$. A natural number $k\in\mathbb{N}$
is called a \textit{$\delta$-exponent} for $t$ if $\delta^{k}t\in\mathbb{Z}[\omega]$.
From Remark 2, it is obvious such a $k$ exists, and the
least such $k$ is called the \textit{least $\delta$-exponent} for $t$.}
\end{mydef3}
When dealing with a vector or matrix $U$, $k$ is a $\delta$-exponent for $U$ if it is a $\delta$-exponent for all of its entries. The least $\delta$-exponent of $U$ is thus the least $k$ that is a $\delta$-exponent for all of its entries.

\subsection{Residues}

\newtheorem{mydef4}[mydef1]{Definition}
\begin{mydef4}
\rm{Let $t\in\mathbb{D}[\omega]$, and let $k$ be a (not necessarily least) $\delta$-exponent for $t$. The \textit{$(n,k)$-residue} of $t$, in symbols $\rho_{n}^{k}(t)$, is defined to be}
\[
\rho_{n}^{k}(t)=\rho_{n}(\delta^{k}t).
\]
Similarly for a matrix $U$ with entries in $\mathbb{D}[\omega]$, we let $\rho_{n}^{k}(U)$ signify the matrix made up of the $(n,k)$-residues of the entries of $U$.
\end{mydef4}

\subsection{Reducibility}

\newtheorem{mydef5}[mydef1]{Definition}
\begin{mydef5}
\rm{We say $x\in\mathbb{Z}[\omega]$ is \textit{reducible}
if $\delta|x$.}
\end{mydef5}

\newtheorem{remark3}[remarkdomain]{Remark}

\begin{remark3}
\rm{The following are equivalent for $x\in\mathbb{Z}[\omega]$:

(a) $x$ is reducible;

(b) $\rho_{1}(x)=0$;

(c) $\rho_{3}(x)\in\{0,\:1+\omega,\:1+\omega^{2},\:1+\omega^{3}\}$.}
\end{remark3}

\section{Decomposition into elementary matrices}

\subsection{Elementary matrices}

\newtheorem{2leveldef}[mydef1]{Definition}
\begin{2leveldef}
\rm{The elementary matrices $\omega_{[j]}$, $H_{[j,m]}$, and $X_{[j,m]}$ are defined to be

\[
\omega_{[j]}=
 \bordermatrix{&\ldots&j&\ldots\cr
            \vdots&I&0&0\cr
            j&0&\omega&0\cr
            \vdots&0&0&I\cr
            }
\quad
H_{[j,m]}=
 \bordermatrix{&\ldots&j&\ldots&m&\ldots\cr
            \vdots&I&0&0&0&0\cr
            j&0&\frac{1}{\sqrt{2}}&0&\frac{1}{\sqrt{2}}&0\cr
            \vdots&0&0&I&0&0\cr
            m&0&\frac{1}{\sqrt{2}}&0&-\frac{1}{\sqrt{2}}&0\cr
            \vdots&0&0&0&0&I\cr
            }
\]

\[
X_{[j,m]}=
 \bordermatrix{&\ldots&j&\ldots&m&\ldots\cr
            \vdots&I&0&0&0&0\cr
            j&0&0&0&1&0\cr
            \vdots&0&0&I&0&0\cr
            m&0&1&0&0&0\cr
            \vdots&0&0&0&0&I\cr
            }.
\]
Note that $\omega_{[j]}$ is a $1$-level matrix and $H_{[j,m]}$, and $X_{[j,m]}$ are $2$-level matrices.}
\end{2leveldef}

\subsection{Properties of unitary matrices}

\newtheorem{lemma1}{Lemma}
\begin{lemma1}
Let $U$ be an $n\times n$ unitary matrix with entries in $\mathbb{D}[\omega]$ and least $\delta$-exponent $k$. Then if $k>0$, there is an even number of 1's in each row and each column of $\rho_{1}^{k}(U)$.
\end{lemma1}
\begin{proof}
Let $u_{jm}$ denote the element of $U$ in the $j$-th row and $m$-th column, and let $v_{jm} = \delta^{k}u_{jm}$. Then, because $U$ is unitary,
$u_{11}^{\dagger}u_{11}+u_{21}^{\dagger}u_{21}+...+u_{n1}^{\dagger}u_{n1}=1$,
which implies that $\frac{1}{\left(\delta^{\dagger}\delta\right)^{k}}(v_{11}^{\dagger}v_{11}+v_{21}^{\dagger}v_{21}+...+v_{n1}^{\dagger}v_{n1})=1$, so $v_{11}^{\dagger}v_{11}+v_{21}^{\dagger}v_{21}+...+v_{n1}^{\dagger}v_{n1}=\left(\delta^{\dagger}\delta\right)^{k}\equiv0\pmod{\delta}$.
Therefore an even number
of $v_{11},...,v_{n1}$ must be congruent to $1\pmod{\delta}$, and the lemma follows.
\end{proof}

\newtheorem{evennumb1s}[lemma1]{Lemma}
\begin{evennumb1s}
Let $U$ be an $n\times n$ unitary matrix with entries in $\mathbb{D}[\omega]$ and least $\delta$-exponent $k$. Then if $k>0$, any two distinct rows of $\rho_{1}^{k}(U)$ will have an even number of 1's in common.
\end{evennumb1s}
\begin{proof}
Assume that two distinct rows of $\rho_{1}^{k}(U)$ do not have an even number of 1's in common. Then, the inner product of these two rows will be congruent to $1\pmod{\delta}$, a contradiction because the inner product of any two rows must be congruent to $0\pmod{\delta}$. The lemma follows.
\end{proof}

\newtheorem{lemma2}[lemma1]{Lemma}
\begin{lemma2}
If $t\in\mathbb{D}[\omega]$ and $t^{\dagger}t\leq1$
and $(t^{\dagger}t)^{\bullet}\leq1$, then the least denominator exponent
of $t$ is not $1$.
\end{lemma2}

\begin{proof}
If $k=0$, then there is nothing to show, so assume that $k\ge1$.
Then we can write $t=\frac{a\omega^{3}+b\omega^{2}+c\omega+d}{\delta}$. Assume $t^{\dagger}t\le1$ and $(t^{\dagger}t)^{\bullet}\le1$.
Then, we have $\delta^{\dagger}\delta t^{\dagger} t =(a^{2}+b^{2}+c^{2}+d^{2})+(cd+bc+ab-da)\sqrt{2}\le\delta^{\dagger}\delta=2+\sqrt{2}$
and $(\delta^{\dagger}\delta t^{\dagger} t)^{\bullet}=(a^{2}+b^{2}+c^{2}+d^{2})-(cd+bc+ab-da)\sqrt{2}\le(\delta^{\dagger}\delta)^{\bullet}=2-\sqrt{2}$. Averaging the two equations, we get that $a^2 + b^2 + c^2 + d^2 \le2$. By a simple case distinction, one can see that the only solutions are $a=b=c=d=0$; $a,\:b=\pm1,\: c=d=0$; $b,\:c=\pm1,\: a=d=0$; $c,\:d=\pm1,\: a=b=0$; and $a,\:d=\pm1,\: b=c=0$. All these solutions
are reducible, so we must have $k>1$, and we are done.
\end{proof}

\newtheorem{corollary1}{Corollary}
\begin{corollary1}
For any unitary matrix $U$ with entries in $\mathbb{D}[\omega]$, the least $\delta$-exponent of $U$ is not $1$.
\end{corollary1}

\begin{proof}
This result follows directly from the previous lemma, as all entries of $U$ satisfy the required property.
\end{proof}

\subsection{Properties of \texorpdfstring{$\mathbb{Z}[\omega]/(\delta^{n})$}{ℤ[ω]/(δⁿ)}}

\newtheorem{denomexponent}[lemma1]{Lemma}
\begin{denomexponent}
Let $\begin{pmatrix}a\\
b\end{pmatrix}$ be a column vector with $a,\:b\in\mathbb{Z}[\omega]$ and $a\equiv b\pmod{\delta^{2}}$.
Then, the entries of $H_{[1,2]}\begin{pmatrix}a\\
b\end{pmatrix}$ are in $\mathbb{Z}[\omega]$.
\end{denomexponent}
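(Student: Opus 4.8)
The plan is to compute the matrix--vector product explicitly and then reduce the claim to a divisibility statement in $\mathbb{Z}[\omega]$. Multiplying out gives
\[
H_{[1,2]}\begin{pmatrix} a\\ b\end{pmatrix}=\begin{pmatrix} \tfrac{1}{\sqrt{2}}(a+b)\\ \tfrac{1}{\sqrt{2}}(a-b)\end{pmatrix},
\]
so what must be shown is that $\tfrac{1}{\sqrt{2}}(a+b)$ and $\tfrac{1}{\sqrt{2}}(a-b)$ both lie in $\mathbb{Z}[\omega]$. Recall from Remark 2 that $\delta^{2}\sim\sqrt{2}$; since multiplying by a unit of $\mathbb{Z}[\omega]$ preserves membership in $\mathbb{Z}[\omega]$, for any $x\in\mathbb{Z}[\omega]$ we have $x/\sqrt{2}\in\mathbb{Z}[\omega]$ if and only if $\delta^{2}\mid x$. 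Hence it suffices to prove $\delta^{2}\mid a-b$ and $\delta^{2}\mid a+b$.

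The first divisibility is exactly the hypothesis: $a\equiv b\pmod{\delta^{2}}$ says precisely that $\delta^{2}\mid a-b$. For the second, I would write $a+b=(a-b)+2b$ and use that $2\sim\sqrt{2}^{2}\sim\delta^{4}$ (again Remark 2), so in particular $\delta^{2}\mid 2$ and therefore $\delta^{2}\mid 2b$. Combining $\delta^{2}\mid a-b$ with $\delta^{2}\mid 2b$ gives $\delta^{2}\mid(a-b)+2b=a+b$, which completes the proof.

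There is essentially no serious obstacle here; the only points needing a little care are the equivalence between ``$\delta^{2}\mid x$'' and ``$x/\sqrt{2}\in\mathbb{Z}[\omega]$'' (which rests on $\sqrt{2}$ and $\delta^{2}$ being associates) and the fact that $\delta^{2}$ divides $2$ (which follows from $2\sim\delta^{4}$). Both are immediate consequences of Remark 2, so the argument is short.
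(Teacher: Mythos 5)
Your proof is correct and follows essentially the same route as the paper: compute the product, use $\delta^{2}\sim\sqrt{2}$ to reduce to divisibility, get $\sqrt{2}\mid a-b$ from the hypothesis, and obtain $\sqrt{2}\mid a+b$ via $a+b=(a-b)+2b$. The only difference is that you spell out the unit/associate justifications slightly more explicitly than the paper does.
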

\begin{proof}
We see that
\[
H_{[1,2]}\begin{pmatrix}a\\
b
\end{pmatrix}=\begin{pmatrix}\frac{a+b}{\sqrt{2}}\\
\frac{a-b}{\sqrt{2}}
\end{pmatrix}.
\]
Because $a\equiv b\pmod{\delta^{2}}$, this means $\delta^{2}\mid a-b$.
Because $\delta^{2}\sim\sqrt{2}$, we then have $\sqrt{2}\mid a-b$.
We also know that $\sqrt{2}\mid2b$, thus $\sqrt{2}|a+b$ and we are done.
\end{proof}

\newtheorem{corollarydenomexpon}[corollary1]{Corollary}
\begin{corollarydenomexpon}
Let $U$ be an $n\times n$ matrix with entries in $\mathbb{D}[\omega]$ and least $\delta$-exponent $k>1$. If the $j$-th and $m$-th row of $\rho_{2}^{k}(U)$ are equal, $H_{[j,m]}U$ has least $\delta$-exponent $k'\le k$.
\end{corollarydenomexpon}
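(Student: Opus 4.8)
The plan is to show that $k$ itself is a $\delta$-exponent for $H_{[j,m]}U$; since the least $\delta$-exponent $k'$ of $H_{[j,m]}U$ is by definition the smallest such, this gives $k'\le k$ immediately. No claim of strict inequality is needed — and indeed cancellation may well make $k'$ strictly smaller — the point for the later reduction algorithm is only that applying $H_{[j,m]}$ never \emph{raises} the denominator exponent.

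First I would record how $H_{[j,m]}$ acts: left multiplication by $H_{[j,m]}$ fixes every row of $U$ other than the $j$-th and $m$-th, replaces the $j$-th row by $\frac{1}{\sqrt2}$ times the sum of the old $j$-th and $m$-th rows, and replaces the $m$-th row by $\frac{1}{\sqrt2}$ times their difference (this is exactly the $2\times 2$ computation already carried out in the previous lemma, applied in coordinates $j$ and $m$). Writing $u_{r\ell}$ for the $(r,\ell)$ entry of $U$, the untouched rows are already cleared by $\delta^{k}$, so all that remains is to check that $\delta^{k}$ also clears $\frac{u_{j\ell}\pm u_{m\ell}}{\sqrt2}$ for every column $\ell$.

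Next I would pass to $\mathbb{Z}[\omega]$. Put $v_{r\ell}=\delta^{k} u_{r\ell}$, which lies in $\mathbb{Z}[\omega]$ since $k$ is a $\delta$-exponent for $U$. The hypothesis that the $j$-th and $m$-th rows of $\rho_{2}^{k}(U)$ coincide unwinds to $\rho_{2}(v_{j\ell})=\rho_{2}(v_{m\ell})$, i.e.\ $v_{j\ell}\equiv v_{m\ell}\pmod{\delta^{2}}$, for every $\ell$. For each fixed $\ell$ I would apply the previous lemma to the column vector $(v_{j\ell},v_{m\ell})^{T}$ — its two coordinates playing the roles of $a$ and $b$ there — and conclude that $\frac{v_{j\ell}+v_{m\ell}}{\sqrt2}$ and $\frac{v_{j\ell}-v_{m\ell}}{\sqrt2}$ both lie in $\mathbb{Z}[\omega]$. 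Since $\delta^{k}\cdot\frac{u_{j\ell}\pm u_{m\ell}}{\sqrt2}=\frac{v_{j\ell}\pm v_{m\ell}}{\sqrt2}$, this says precisely that $\delta^{k}$ clears every entry of $H_{[j,m]}U$, so $k$ is a $\delta$-exponent for $H_{[j,m]}U$ and therefore $k'\le k$.

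I do not expect a genuine obstacle here: all of the arithmetic content is already packaged in the previous lemma, and the rest is bookkeeping — tracking which entries $H_{[j,m]}$ disturbs, and the routine translation between the residue condition on $\rho_{2}^{k}(U)$ and a congruence modulo $\delta^{2}$ in $\mathbb{Z}[\omega]$. The one mild subtlety worth flagging is that the hypothesis $k>1$ is not actually used in this direction; it appears only because this is the regime in which the corollary gets invoked, and because by Corollary 1 the least $\delta$-exponent of a unitary matrix is never equal to $1$.
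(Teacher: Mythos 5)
Your proposal is correct and is exactly the argument the paper intends: the paper's proof is the single line ``This follows from the previous lemma,'' and your write-up simply fills in the routine details of applying that lemma columnwise to the pairs $(\delta^{k}u_{j\ell},\delta^{k}u_{m\ell})$ to show $k$ remains a $\delta$-exponent after applying $H_{[j,m]}$. Your observation that $k>1$ is not actually needed for this direction is also accurate.
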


\begin{proof}
This follows from the previous lemma.
\end{proof}

\newtheorem{corollarydenomexpon3}[lemma1]{Lemma}
\begin{corollarydenomexpon3}
Let $\begin{pmatrix}a\\
b\end{pmatrix}$ be a column vector with $a,\:b\in\mathbb{Z}[\omega]$ and $a\equiv b\pmod{\delta^{3}}$.
Then, the entries of $H_{[1,2]}\begin{pmatrix}a\\
b\end{pmatrix}$ are in $\mathbb{Z}[\omega]$ and are divisible by $\delta$.
\end{corollarydenomexpon3}

\begin{proof}
We see that
\[
H_{[1,2]}\begin{pmatrix}a\\
b
\end{pmatrix}=\begin{pmatrix}\frac{a+b}{\sqrt{2}}\\
\frac{a-b}{\sqrt{2}}
\end{pmatrix}.
\]
Because $a\equiv b\pmod{\delta^{3}}$, this means $\delta^{3}|a-b$. Because $\delta^{3}\sim\delta\sqrt{2}$, we then have $\delta\sqrt{2}|a-b$. We also know that $\delta\sqrt{2}|2b$, thus $\delta\sqrt{2}|a+b$ and we are done.
\end{proof}

\newtheorem{corollarydenomexpon1}[corollary1]{Corollary}
\begin{corollarydenomexpon1}
Let $U$ be an $n\times n$ matrix with entries in $\mathbb{D}[\omega]$ and least $\delta$-exponent $k>1$. If the $j$-th and $m$-th row of $\rho_{3}^{k}(U)$ are equal, then the $j$-th and $m$-th rows of $H_{[j,m]}U$ have least $\delta$-exponent $k'<k$.
\end{corollarydenomexpon1}

\begin{proof}
This follows from the previous lemma.
\end{proof}

\newtheorem{remark6}[remarkdomain]{Remark}
\begin{remark6}
\rm{When working with exponents of $\omega$ in $\mathbb{Z}[\omega]/(\delta^{3})$, we work modulo $4$. This is because $\omega^{4}\equiv 1\pmod{\delta^{3}}$.}
\end{remark6}

\newtheorem{congruent02}[lemma1]{Lemma}
\begin{congruent02}
If $\omega^{x}+\omega^{y}\equiv0\pmod{\delta^{3}}$ then $x\equiv y\pmod{4}$.
\end{congruent02}
\begin{proof}
We note that $\omega^{x}+\omega^{y}\equiv0\pmod{\delta^{3}}$ if and only if $\omega^{x}\equiv\omega^{y}\pmod{\delta^{3}}$. The claim then becomes obvious by Remark 3.
\end{proof}

\newtheorem{congruent04}[lemma1]{Lemma}
\begin{congruent04}
If $\omega^{x}+\omega^{y}+\omega^{z}+\omega^{u}\equiv0\pmod{\delta^{3}}$, then there are three cases up to permutations of $x$, $y$, $z$, and $u$:\\
(a)\: $x\equiv y\equiv z\equiv u\pmod{4}$;\\
(b)\: $x\equiv y\pmod{4}$, $z\equiv u\pmod{4}$, and $x\not\equiv z\pmod{4}$;\\
(c)\: $x$, $y$, $z$, and $u$ are distinct modulo 4.
\end{congruent04}
\begin{proof}
If $x$, $y$, $z$, and $u$ are distinct, we are in case (c) and are done. Otherwise, two are equal, say $x\equiv y\pmod{4}$. Then, we have $\omega^{x}+\omega^{y}\equiv\omega^{z}+\omega^{u}\equiv0\pmod{\delta^{3}}$, so by Lemma 6, $z\equiv u\pmod{4}$. Then, case (a) holds if $z\equiv x\pmod{4}$, and case (b) holds if $z\not\equiv x\pmod{4}$.
\end{proof}

\newtheorem{remark7}[remarkdomain]{Remark}
\begin{remark7}
\rm{Let $U$ be an $n\times n$ unitary matrix with entries in $\mathbb{D}[\omega]$ and (not necessarily least) $\delta$-exponent $k$. We introduce the following notation for $\rho_{m}^{k}(U)$:
\[
\rho_{m}^{k}(U)=A_{0}+\delta A_{1}+\delta^{2}A_{2}+\ldots+\delta^{m-1}A_{m-1},
\]
where $A_{i}$ is an $n\times n$ matrix with entries in $\{0,\:1\}$, noting that $\{1,\:\delta,\:\delta^{2}\,\ldots,\:\delta^{m-1}\}$ forms a basis for $\mathbb{Z}[\omega]/(\delta^{m})$ over $\{0,\:1\}$. For example, when $m=3$ we have:
\[
\begin{tabular}{|c|c|}
\hline
$0$ & $0+0\delta+0\delta^{2}$\tabularnewline
\hline
$1+\omega$ & $0+1\delta+0\delta^{2}$\tabularnewline
\hline
$1+\omega^{2}$ & $0+0\delta+1\delta^{2}$\tabularnewline
\hline
$1+\omega^{3}$ & $0+1\delta+1\delta^{2}$\tabularnewline
\hline
$1$ & $1+0\delta+0\delta^{2}$\tabularnewline
\hline
$\omega$ & $1+1\delta+0\delta^{2}$\tabularnewline
\hline
$\omega^{2}$ & $1+0\delta+1\delta^{2}$\tabularnewline
\hline
$\omega^{3}$ & $1+1\delta+1\delta^{2}$\tabularnewline
\hline
\end{tabular}
\]}
\end{remark7}

\subsection{Statement of the base case and main lemma}

\newtheorem{basecase}[lemma1]{Lemma}
\begin{basecase}
Let $n\le 4$, and let $U$ be an $n\times n$ unitary matrix with entries in $\mathbb{D}[\omega]$ and least $\delta$-exponent $k=0$. Then there exists a sequence $S_{1},\ldots, S_{m}$ of elementary operators such that $S_{1}\ldots S_{m}U=I$. Moreover, there exists a fixed bound $M$, depending only on $n$, such that $m\le M$.
\end{basecase}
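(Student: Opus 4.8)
The plan is to exploit the hypothesis $k=0$ — which says exactly that the entries of $U$ lie in $\mathbb{Z}[\omega]$ — to pin down the shape of $U$ completely: \emph{$U$ must be a monomial matrix (one nonzero entry per row and per column) whose nonzero entries are eighth roots of unity $\pm\omega^{j}$}. Once this is known, reducing $U$ to $I$ with a bounded number of elementary operators is pure bookkeeping, so essentially all the content lies in this structural claim.

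To prove the structural claim I would argue entrywise. Fix a column index $m$. Since $U^{\dagger}U=I$ we have $\sum_{j}u_{jm}^{\dagger}u_{jm}=1$, a sum of nonnegative real numbers; applying the $\sqrt 2$-conjugation automorphism $(-)^{\bullet}$, which commutes with $(-)^{\dagger}$, gives $\sum_{j}(u_{jm}^{\bullet})^{\dagger}u_{jm}^{\bullet}=1$ as well. Hence every entry $t=u_{jm}\in\mathbb{Z}[\omega]$ satisfies $t^{\dagger}t\le 1$ and $(t^{\dagger}t)^{\bullet}\le 1$. Writing $t=a\omega^{3}+b\omega^{2}+c\omega+d$ with $a,b,c,d\in\mathbb{Z}$, the computation from the proof of Lemma 3 gives $t^{\dagger}t=(a^{2}+b^{2}+c^{2}+d^{2})+(cd+bc+ab-da)\sqrt 2$, so adding the two inequalities kills the $\sqrt 2$ term and leaves $a^{2}+b^{2}+c^{2}+d^{2}\le 1$. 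Therefore either $t=0$, or exactly one of $a,b,c,d$ equals $\pm 1$ and the rest vanish, i.e. $t=\pm\omega^{j}$ with $t^{\dagger}t=1$. Thus every column of $U$ has at most one nonzero entry, and — being a column of an invertible matrix — exactly one, of modulus $1$; orthogonality of distinct columns then forces these nonzero entries to lie in distinct rows. So $U$ is monomial, supported on a permutation $\pi$ of $\{1,\dots,n\}$, with nonzero entries in $\{\pm\omega^{j}\}$.

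For the reduction, note that left-multiplication by $X_{[j,m]}$ swaps rows $j$ and $m$, replacing the support permutation $\pi$ by $(j\,m)\pi$; since every permutation of $\{1,\dots,n\}$ is a product of at most $n-1$ transpositions, at most $n-1$ matrices $X_{[j,m]}$ suffice to bring $U$ to diagonal form $D=\mathrm{diag}(\omega^{j_{1}},\dots,\omega^{j_{n}})$ with each $j_{l}\in\{0,\dots,7\}$. Then left-multiplication by $\omega_{[l]}$ multiplies the $l$-th diagonal entry $\omega^{j_{l}}$ by $\omega$, so applying $\omega_{[l]}$ exactly $(8-j_{l})\bmod 8\le 7$ times for each $l$ turns $D$ into $I$, using at most $7n$ further operators (the operators $H_{[j,m]}$ are not needed here). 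Altogether we obtain $S_{1}\cdots S_{m}U=I$ with $m\le(n-1)+7n=8n-1$, so $M=8n-1$ works (and for $n\le 4$ one may simply take $M=31$), depending only on $n$.

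The only step requiring care is the structural claim, and within it the joint use of $U^{\dagger}U=I$ \emph{and} its $\sqrt 2$-conjugate: a single equation $\sum_{j}|u_{jm}|^{2}=1$ does not by itself force the entries to have modulus $1$ (for instance $2-\sqrt 2<1$ is an admissible value of $t^{\dagger}t$ for $t\in\mathbb{Z}[\omega]$), and it is precisely the extra inequality $(t^{\dagger}t)^{\bullet}\le 1$, combined with $a^{2}+b^{2}+c^{2}+d^{2}\ge 1$ for nonzero $t$, that rules such entries out. This is the same mechanism already used in Lemma 3 and Corollary 1. Everything after the structural claim is a finite, $n$-dependent count, so no further difficulty arises.
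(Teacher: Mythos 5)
Your proof is correct and follows essentially the same route as the paper: use unitarity together with the $\sqrt{2}$-conjugate of the column-norm equation to force $a^{2}+b^{2}+c^{2}+d^{2}\le 1$ for each entry, conclude that $U$ is a monomial matrix with entries among the eighth roots of unity, and then reduce to $I$ with a bounded number of $X_{[j,m]}$ and $\omega_{[j]}$ operators. You simply supply the details (including the explicit bound $M=8n-1$) that the paper's one-paragraph proof leaves implicit.
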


\begin{proof}
The entries of $U$ satisfy $tt^{\dagger}\le1$ and $(tt^{\dagger})^{\bullet}\le1$. From this it follows that each entry is $0$ or $\omega^{l}$, and there is exactly one entry of $\omega^{l}$ in each row and each column. It is then trivial to reduce $U$ to $I$ using elementary operators of types $X$ and $\omega$.
\end{proof}

\newtheorem{maintheorem}[lemma1]{Lemma}
\begin{maintheorem}
Let $n\le4$, and let $U$ be an $n\times n$ unitary matrix with entries in $\mathbb{D}[\omega]$ and least $\delta$-exponent $k>1$. Then there exists two sequences $S_{1},\dots,S_{m}$ and $T_{1},\dots,T_{j}$ of elementary matrices such that $S_{1},\dots,S_{m}UT_{1},\dots,T_{j}$ has least $\delta$-exponent $k'<k$. Moreover, there exists a fixed bound $N$ dependent only on $n$ such that $m+j\le N$.
\end{maintheorem}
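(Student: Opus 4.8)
The plan is to reduce $U$ in stages, using the structure of $\rho_3^k(U)$ to decide which elementary operators to apply, always tracking the least $\delta$-exponent. Since $k>1$, Lemma 1 tells us each row and column of $\rho_1^k(U)$ has an even number of $1$'s, and Lemma 2 (via Remark 3) identifies those $1$-positions with the entries whose residue mod $\delta^3$ lies in $\{1,\omega,\omega^2,\omega^3\}$, i.e. the non-reducible entries. First I would fix a single column, say the first, and consider its non-reducible entries. By Lemma 1 there is an even number of them; my goal is to pair them up via $H_{[j,m]}$ operators so that each pair, after multiplication, contributes entries divisible by $\delta$ — which by Corollary (\ref{corollarydenomexpon1}) happens precisely when the $j$-th and $m$-th rows of $\rho_3^k(U)$ agree, or at least when the relevant entries agree mod $\delta^3$. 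Here the case analysis of Lemma 7 (and Lemma 6) is the engine: looking at the column entries $\omega^{x_1},\dots,\omega^{x_{2r}}$ modulo $\delta^3$, Lemma 7 says they fall into classes mod $4$ in one of the three patterns (a), (b), (c). In patterns (a) and (b) the entries already come in matching pairs, so a bounded number of $H$-operators kills the denominator in that column. In pattern (c), where all four residues are distinct, I would first apply $\omega_{[j]}$ operators (which cost one gate each and permute residues mod $4$) and possibly one $H$-operator to force two entries into the same class, then proceed.

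Second, after handling one column I need to handle all $n$ columns simultaneously. The key structural fact is Lemma (\ref{evennumb1s}): distinct rows of $\rho_1^k(U)$ share an even number of $1$'s. This means the "pattern of non-reducible positions" across the matrix is highly constrained — for $n\le 4$ there are only finitely many such $0/1$ matrices satisfying both the even-row/column condition and the even-overlap condition, so I would like to enumerate them. For each admissible pattern I exhibit an explicit bounded sequence of $X$, $\omega$, and $H$ operators (acting on rows, i.e. on the left, and possibly $X$-type column operations on the right to align things) that, by the corollaries, produces a matrix all of whose entries in the affected rows are divisible by $\delta$; combined with Lemma 1 applied column-by-column this forces every entry to become divisible by $\delta$, so $\delta^{k-1}(S_1\cdots S_m U T_1\cdots T_j)\in\mathbb{Z}[\omega]$, i.e. the new least $\delta$-exponent $k'$ satisfies $k'\le k-1<k$. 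The bound $N$ is then the maximum, over the finitely many patterns and the finitely many choices of $\omega$-corrections, of the length of the sequence used — finite because $n\le 4$ bounds everything.

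Third, a cleaner alternative I would try first, to avoid a brute enumeration: work with $\rho_3^k(U)$ directly. Write $\rho_3^k(U)=A_0+\delta A_1+\delta^2 A_2$ as in Remark 7, where $A_0=\rho_1^k(U)$ is the $0/1$ matrix whose even-overlap and even-parity structure is governed by Lemmas 1 and (\ref{evennumb1s}). Since $A_0$ is a symmetric-overlap $0/1$ matrix with all inner products even, over $\mathbb{F}_2$ its rows span a self-orthogonal code; for $n\le 4$ such codes are tiny, and one can always find a sequence of row operations realizable by $H_{[j,m]}$ (which over $\mathbb{F}_2$ acts on $A_0$ by $\mathrm{row}_j,\mathrm{row}_m \mapsto \mathrm{row}_j+\mathrm{row}_m,\ \mathrm{row}_j+\mathrm{row}_m$ up to a carry into $A_1$) that drives $A_0$ to zero, i.e. makes the whole residue divisible by $\delta$. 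Feeding this through Corollary (\ref{corollarydenomexpon1}) gives $k'<k$. I expect the main obstacle to be exactly the bookkeeping of carries — applying $H_{[j,m]}$ changes not just $A_0$ but also $A_1$ and $A_2$, and I must make sure that forcing $A_0=0$ does not require reintroducing $1$'s, and that the $\omega_{[j]}$ corrections needed in the distinct-residue case (c) of Lemma 7 can always be arranged within a bounded budget. Once the reduction is shown to terminate in one step with a uniformly bounded number of operators for each of the finitely many cases, taking $N$ to be the maximum over those cases completes the proof.
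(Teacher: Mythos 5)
Your skeleton matches the paper's: use Lemmas 1 and 2 to enumerate the finitely many admissible $0/1$ patterns of $\rho_1^k(U)$ for $n\le 4$, use the unitarity relations modulo $\delta^3$ together with Lemmas 6 and 7 to choose $\omega_{[j]}$ corrections, and then apply $H_{[j,m]}$ through the corollaries. But there is a genuine gap at exactly the point you flag and then set aside. You assume that after a bounded number of $\omega_{[j]}$ corrections (and ``possibly one $H$-operator'') two rows can be brought into congruence modulo $\delta^3$, so that one application of Corollary 3 strictly decreases $k$ ``in one step.'' For $n=4$ this is false. In the patterns the paper labels (iii), (iv) and (v), the unitarity constraints only force two rows to agree modulo $\delta^2$, not $\delta^3$: e.g.\ in case (iv) one can have $\delta y_{13}+\delta x_{23}\equiv\delta^2\pmod{\delta^3}$, an obstruction no $\omega_{[j]}$ removes. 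The paper's resolution is a two-tier mechanism your plan lacks: when only a $\delta^2$-congruence is available, apply $H_{[j,m]}$ anyway, invoking Corollary 2 to guarantee the exponent does not \emph{increase}; this changes the pattern of $\rho_1^k$, and one must then check that the new pattern is strictly earlier in a well-founded ordering of cases (v $\to$ ii or iv, iv $\to$ i or iii, iii $\to$ i or ii), terminating in a pattern where a genuine $\delta^3$-congruence exists and Corollary 3 applies. Verifying that each intermediate step lands in an earlier case, and that the needed congruences exist, is the content of the paper's Lemmas 10--23 and its residue-by-residue analysis of $\omega$-exponents modulo $4$; your ``bookkeeping of carries'' remark correctly locates this difficulty but does not discharge it, and your alternative $\mathbb{F}_2$ self-orthogonal-code formulation hits the same wall, since driving $A_0$ to zero by $H$-moves is legitimate only when the accompanying $A_1$, $A_2$ data cooperate.

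A second, smaller imprecision: you propose pairing up the non-reducible entries of a single column with $H_{[j,m]}$ so as to ``kill the denominator in that column.'' $H_{[j,m]}$ is a row operation; it lowers the exponent of rows $j$ and $m$ only if those rows agree modulo $\delta^3$ in \emph{every} column simultaneously (Lemma 5 and Corollary 3). So the column-by-column pairing must be replaced by an argument that the required congruences hold across entire rows at once, which is exactly what the paper's inner-product computations (Lemmas 12, 14, 17, 19, 21) establish. With these two gaps filled, your plan would coincide with the paper's proof.
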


\subsection{Proof for \texorpdfstring{$2\times2$}{2 × 2} matrices}

We begin with the $2\times2$ case because the $1\times1$ case is trivial.
Let $U$ be a $2\times2$ unitary matrix with entries in $\mathbb{D}[\omega]$ and least $\delta$-exponent $k>1$. As a consequence of Lemma 1, we must have
\[\rho_{1}^{k}(U)=\begin{pmatrix}1 & 1\\
1 & 1
\end{pmatrix}.
\]
This means that
\[
\rho_{3}^{k}(U)=\begin{pmatrix}\omega^{a} & \omega^{b}\\
\omega^{c} & \omega^{d}
\end{pmatrix}
\]
for some $a$, $b$, $c$, $d\in\mathbb{Z}$.

\newtheorem{lemma2x2}[lemma1]{Lemma}
\begin{lemma2x2}
There exists an $x\pmod{4}$ such that $a+x\equiv c\pmod{4}$ and such
that $b+x\equiv d\pmod{4}$.
\end{lemma2x2}

\begin{proof}
Taking the inner product of the first and second rows, we get that $\omega^{c-a}+\omega^{d-b}\equiv0\pmod{4}$. By Lemma 6, this occurs only when $c-a\equiv d-b\pmod{4}$. Taking $x\equiv c-a\pmod{4}$, we are done.
\end{proof}

We now have
\[
\rho^k_3(\omega_{[1]}^{x}U)=\begin{pmatrix}\omega^{c} & \omega^{d}\\
\omega^{c} & \omega^{d}
\end{pmatrix}
\]
and by Corollary 3, applying the row operation $H_{[1,2]}$ reduces the least $\delta$-exponent, so we are done.

\subsection{Proof for \texorpdfstring{$3\times3$}{3 × 3} matrices}

Let $U$ be a $3\times3$ unitary matrix with entries in $\mathbb{D}[\omega]$ and least $\delta$-exponent $k>1$. As a consequence of Lemma 1, we have two cases up to permutations of rows and columns:
\[
(i)\:\rho_{1}^{k}(U)=\begin{pmatrix}1 & 1 & 0\\
1 & 1 & 0\\
0 & 0 & 0
\end{pmatrix}
\quad (ii)\:\rho_{1}^{k}(U)=\begin{pmatrix}1 & 1 & 0\\
1 & 0 & 1\\
0 & 1 & 1
\end{pmatrix}
\]

We note that case (ii) cannot occur by Lemma 2.
Thus, without loss of generality and up to permutations of rows and columns, we must have:
\[
\rho_{3}^{k}(U)=\begin{pmatrix}\omega^{a} & \omega^{b} & \delta x_{13}\\
\omega^{c} & \omega^{d} & \delta x_{23}\\
\delta x_{31} & \delta x_{32} & \delta x_{33}
\end{pmatrix}
\]
where $x_{jm}\in\mathbb{Z}[\omega]/(\delta^{2})$.
We begin by looking for potential values of $\delta x_{13}$. Because $U$ is unitary, we know that $\omega^{-a}\omega^{a}+\omega^{-b}\omega^{b}+\delta^{\dagger}\delta x_{13}^{\dagger}x_{13}\equiv 0\pmod{\delta^{3}}$, which implies that $\delta^{\dagger}\delta x_{13}^{\dagger}x_{13}\equiv 0\pmod{\delta^{3}}$. This implies that we must have $x^{\dagger}_{13}x_{13}\equiv0\pmod{\delta}$, hence $\delta|x_{13}$. Similarly, we have that $\delta$ divides $x_{23}$, $x_{31}$, and $x_{32}$.

\newtheorem{lemma4}[lemma1]{Lemma}
\begin{lemma4}
There exists an $x\pmod{4}$ such that $a+x\equiv c\pmod{4}$ and such
that $b+x\equiv d\pmod{4}$.
\end{lemma4}

\begin{proof}
Because $U$ is unitary, we know that $\omega^{-a}\omega^{c}+\omega^{-b}\omega^{d}+\delta^{\dagger}\delta x_{13}x_{23}\equiv0\pmod{\delta^{3}}$. But then $\omega^{-a}\omega^{c}+\omega^{-b}\omega^{d}\equiv0\pmod{\delta^{3}}$ because we know $\delta$ divides both $x_{13}$ and $x_{23}$. By Lemma 6, we have $c-a \equiv d-b\pmod{4}$, and setting
$x\equiv c-a\pmod{4}$ we are done.
\end{proof}

We now have
\[
\rho_{3}^{k}(\omega^{x}_{[1]}U)=\begin{pmatrix}\omega^{c} & \omega^{d} & \delta y_{13}\\
\omega^{c} & \omega^{d} & \delta x_{23}\\
\delta x_{31} & \delta x_{32} & \delta x_{33}
\end{pmatrix}
\]
where $y_{13}\equiv\omega^{x} x_{13}\pmod{\delta^{3}}$. Note that $\delta|y_{13}$.

\newtheorem{lemma3x3equal}[lemma1]{Lemma}
\begin{lemma3x3equal}
We always have $\delta y_{13}\equiv\delta x_{23}\pmod{\delta^3}$.
\end{lemma3x3equal}

\begin{proof}
By taking the inner product of the first column and the third column, we have $\omega^{-c}\delta y_{13} + \omega^{-c}\delta x_{23} +
\delta^{\dagger}\delta x_{31}^{\dagger} x_{33} \equiv
\omega^{-c}\delta(y_{13}+x_{23})\equiv 0\pmod{\delta^3}$. Hence,
since $\omega$ is invertible, $\delta(y_{13}+x_{23})\equiv 0\pmod{\delta^{3}}$, or equivalently, $\delta y_{13}\equiv \delta x_{23}\pmod{\delta^{3}}$.
\end{proof}

We now have
\[
\rho_{3}^{k}(\omega^{x}_{[1]}U)=\begin{pmatrix}\omega^{c} & \omega^{d} & \delta y_{13}\\
\omega^{c} & \omega^{d} & \delta y_{13}\\
\delta x_{31} & \delta x_{32} & \delta x_{33}
\end{pmatrix}
\]
and by Corollary 3, we
can apply the row operation $H_{[1,2]}$ and we are done.
\subsection{Proof for \texorpdfstring{$4\times4$}{4 × 4} matrices}

Let $U$ be a unitary $4\times4$ matrix with entries in $\mathbb{D}[\omega]$ and least $\delta$-exponent $k>1$. As a consequence of Lemmas 1 and 2, we have the following cases for $U$ up to permutations of rows and columns as well as taking transposes:

\[
(i)\:\rho_{1}^{k}(U)=\begin{pmatrix}1 & 1 & 0 & 0\\
1 & 1 & 0 & 0\\
0 & 0 & 0 & 0\\
0 & 0 & 0 & 0
\end{pmatrix}\quad(ii)\:\rho_{1}^{k}(U)=\begin{pmatrix}1 & 1 & 1 & 1\\
1 & 1 & 1 & 1\\
0 & 0 & 0 & 0\\
0 & 0 & 0 & 0
\end{pmatrix}\quad(iii)\:\rho_{1}^{k}(U)=\begin{pmatrix}1 & 1 & 0 & 0\\
1 & 1 & 0 & 0\\
0 & 0 & 1 & 1\\
0 & 0 & 1 & 1
\end{pmatrix}
\]
\[
(iv)\:\rho_{1}^{k}(U)=\begin{pmatrix}1 & 1 & 0 & 0\\
1 & 1 & 0 & 0\\
1 & 1 & 1 & 1\\
1 & 1 & 1 & 1
\end{pmatrix}\quad
(v)\:\rho_{1}^{k}(U)=\begin{pmatrix}1 & 1 & 1 & 1\\
1 & 1 & 1 & 1\\
1 & 1 & 1 & 1\\
1 & 1 & 1 & 1
\end{pmatrix}
\]
\noindent
We will now prove each of the above cases satisfies Theorem 1.
\subsubsection{Case (i)}
In case (i), we have
\[
\rho_{3}^{k}(U)=\begin{pmatrix}\omega^{a} & \omega^{b} & \delta x_{13} & \delta x_{14}\\
\omega^{c} & \omega^{d} & \delta x_{23} & \delta x_{24}\\
\delta x_{31} & \delta x_{32} & \delta x_{33} & \delta x_{34}\\
\delta x_{41} & \delta x_{42} & \delta x_{43} & \delta x_{44}
\end{pmatrix}.
\]
\newtheorem{lemmacasei}[lemma1]{Lemma}
\begin{lemmacasei}
If $U$ is of case (i), there exists an $x\pmod{4}$ such that $a+x\equiv c\pmod{4}$ and such
that $b+x\equiv d\pmod{4}$.
\end{lemmacasei}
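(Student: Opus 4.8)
The plan is to imitate the proofs of Lemmas 10 and 11 (the $2\times2$ and $3\times3$ versions of this very statement): extract a congruence from the unitarity of $U$ by pairing the first two rows of $\rho_3^k(U)$, then apply Lemma 6 (if $\omega^x+\omega^y\equiv0\pmod{\delta^3}$ then $x\equiv y\pmod4$) and set $x\equiv c-a\pmod4$, noting that then $b+x\equiv b+(d-b)\equiv d\pmod4$. Concretely, write $v=\delta^k U\in\mathbb{Z}[\omega]^{4\times4}$, so $vv^{\dagger}=(\delta^{\dagger}\delta)^k I$ and $\rho_3(v)=\rho_3^k(U)$. The $(1,2)$ entry of $vv^{\dagger}$ is $0$, so $\sum_l v_{1l}v_{2l}^{\dagger}=0$; reducing modulo $\delta^3$ and using that conjugation preserves divisibility by $\delta$ (as $\delta^{\dagger}\sim\delta$), this becomes
\[
\omega^{a-c}+\omega^{b-d}+\delta^{\dagger}\delta\bigl(x_{13}x_{23}^{\dagger}+x_{14}x_{24}^{\dagger}\bigr)\equiv0\pmod{\delta^3}.
\]
If the cross term vanishes modulo $\delta^3$ we are left with $\omega^{a-c}+\omega^{b-d}\equiv0\pmod{\delta^3}$ and Lemma 6 finishes the proof.

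The step I expect to be the main obstacle is disposing of the cross term. Unlike the $3\times3$ case, where each of the first two rows had a single off-diagonal entry and one could conclude $\delta\mid x_{13}$ outright from a row norm, here there are two off-diagonal entries per row and entrywise divisibility need not hold. Instead I would extract only parity information: since $k>1$ we have $(\delta^{\dagger}\delta)^k\equiv0\pmod{\delta^3}$, and $2=\omega^a(\omega^a)^{\dagger}+\omega^b(\omega^b)^{\dagger}\equiv0\pmod{\delta^3}$ because $2\sim\delta^4$, so the diagonal entries $(1,1)$ and $(2,2)$ of $vv^{\dagger}$ give $x_{13}x_{13}^{\dagger}+x_{14}x_{14}^{\dagger}\equiv0$ and $x_{23}x_{23}^{\dagger}+x_{24}x_{24}^{\dagger}\equiv0\pmod{\delta}$. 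In $\mathbb{Z}[\omega]/(\delta)\cong\mathbb{F}_2$, where conjugation is trivial and $t^2=t$, these say $\rho_1(x_{13})=\rho_1(x_{14})$ and $\rho_1(x_{23})=\rho_1(x_{24})$.

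Putting these together,
\[
\rho_1\bigl(x_{13}x_{23}^{\dagger}+x_{14}x_{24}^{\dagger}\bigr)=\rho_1(x_{13})\rho_1(x_{23})+\rho_1(x_{14})\rho_1(x_{24})=2\,\rho_1(x_{13})\rho_1(x_{23})=0,
\]
so $\delta\mid x_{13}x_{23}^{\dagger}+x_{14}x_{24}^{\dagger}$; since $\delta^{\dagger}\delta\sim\delta^2$, the cross term $\delta^{\dagger}\delta(x_{13}x_{23}^{\dagger}+x_{14}x_{24}^{\dagger})$ is divisible by $\delta^3$, as required. (One could also split into the cases where $\{\rho_1(x_{13}),\rho_1(x_{14})\}=\{0\}$, or $\{\rho_1(x_{23}),\rho_1(x_{24})\}=\{0\}$, or all four of $\rho_1(x_{13}),\rho_1(x_{14}),\rho_1(x_{23}),\rho_1(x_{24})$ equal $1$, and check the cross term vanishes mod $\delta^3$ in each; the $\mathbb{F}_2$ computation does all three at once.) Apart from this cross-term bookkeeping the argument is a near-verbatim transcription of Lemmas 10 and 11, and it uses nothing beyond unitarity of $U$ and $k>1$.
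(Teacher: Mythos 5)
Your proof is correct and is essentially the paper's own argument transposed: the paper pairs the first two \emph{columns} and uses the column norms to show $x_{31}\equiv x_{41}$ and $x_{32}\equiv x_{42}\pmod{\delta}$, whence the cross term $\delta^{\dagger}\delta(x_{31}^{\dagger}x_{32}+x_{41}^{\dagger}x_{42})$ vanishes modulo $\delta^{3}$, exactly as your $\mathbb{F}_2$ computation does for the rows. The reduction to Lemma 6 and the choice $x\equiv c-a\pmod{4}$ then coincide with the paper's.
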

\begin{proof}
Taking the inner product of the first column with itself, we have $\omega^{-a}\omega^{a}+\omega^{-c}\omega^{c}+\delta^{\dagger}\delta x_{31}^{\dagger}x_{31}+\delta^{\dagger}\delta x_{41}^{\dagger}x_{41}\equiv0\pmod{\delta^{3}}$.
We then have that $x_{31}^{\dagger}x_{31}+x_{41}^{\dagger}x_{41}\equiv0\pmod{\delta}$,
which can only occur if $x_{31}\equiv x_{41}\pmod{\delta}$. Using
a similar argument on the second column, we get that $x_{32}\equiv x_{42}\pmod{\delta}$.
Then, taking the inner products of the first and second columns, we get
$\omega^{-a}\omega^{b}+\omega^{-c}\omega^{d}+\delta^{\dagger}\delta x_{31}^{\dagger}x_{32}+\delta^{\dagger}\delta x_{41}^{\dagger}x_{42}\equiv0\pmod{\delta^{3}}$.
Because $x_{31}^{\dagger}x_{32}\equiv x_{41}^{\dagger}x_{42}\pmod{\delta}$,
this implies that $\delta^{\dagger}\delta x_{31}^{\dagger}x_{32}+\delta^{\dagger}\delta x_{41}^{\dagger}x_{42}\equiv0\pmod{\delta^{3}}$.
We then have $\omega^{-a}\omega^{b}+\omega^{-c}\omega^{d}\equiv0\pmod{\delta^{3}}$, so by Lemma 6 we have $b-a\equiv d-c\pmod{4}$, and we may take $x\equiv c-a\pmod{4}$.\\
\end{proof}
Now that we know such an $x$ exists, without loss of generality we have
\[
\rho_{3}^{k}(\omega^{x}_{[1]}U\omega^{-c}_{[1]}\omega^{-d}_{[2]})=\begin{pmatrix}1 & 1 & 0 & 0\\
1 & 1 & 0 & 0\\
0 & 0 & 0 & 0\\
0 & 0 & 0 & 0
\end{pmatrix}+\delta\begin{pmatrix}0 & 0 & y_{13} & y_{14}\\
0 & 0 & y_{23} & y_{24}\\
y_{31} & y_{32} & y_{33} & y_{34}\\
y_{41} & y_{42} & y_{43} & y_{44}
\end{pmatrix}+\delta^{2}\begin{pmatrix}0 & 0 & z_{13} & z_{14}\\
0 & 0 & z_{23} & z_{24}\\
z_{31} & z_{32} & z_{33} & z_{34}\\
z_{41} & z_{42} & z_{43} & z_{44}
\end{pmatrix}
\]
using the notation presented in Remark 6.

\newtheorem{remarkwlog}[remarkdomain]{Remark}
\begin{remarkwlog}
\rm{We note that the proof for Lemma 13 only used the first two columns, and did not make any assumptions about columns three or four.}
\end{remarkwlog}

\newtheorem{lemma4x4equal}[lemma1]{Lemma}
\begin{lemma4x4equal}
We always have $y_{13}=y_{23}$, $y_{14}=y_{24}$, $z_{13}=z_{23}$, and $z_{14}=z_{24}$.
\end{lemma4x4equal}
\begin{proof}
Taking the inner products of columns two and three, we get $\delta(y_{13}+y_{23})+\delta^{\dagger}\delta(y_{32}^{\dagger}y_{33}+y_{42}^{\dagger}y_{43})+\delta^{2}(z_{13}+z_{23})\equiv0\pmod{\delta^{3}}$.
This implies that $y_{13}=y_{23}$.$ $ Taking
the inner product of the second column with itself, we get that $\delta^{\dagger}\delta(y_{32}^{\dagger}y_{32}+y_{42}^{\dagger}y_{42})\equiv0\pmod{\delta^{3}}$,
so $y_{32}=y_{42}$. Taking the inner product
of the third column and itself, we get that $\delta^{\dagger}\delta(y_{13}^{\dagger}y_{13}+y_{23}^{\dagger}y_{23}+y_{33}^{\dagger}y_{33}+y_{43}^{\dagger}y_{43})\equiv0\pmod{\delta^{3}}$.
Because $y_{13}=y_{23}$, this implies that $y_{33}= y_{43}$.
Using this fact, the inner product of columns two and three now gives
us that $\delta^{2}(z_{13}+z_{23})\equiv0\pmod{\delta^{3}}$, so
we must have $z_{13}=z_{23}$. A similar argument
gives that $y_{14}=y_{24}$ and $z_{14}=z_{24}$.
\end{proof}

We now have
\[
\rho_{3}^{k}(\omega^{x}_{[1]}U\omega^{-c}_{[1]}\omega^{-d}_{[2]})=\begin{pmatrix}1 & 1 & 0 & 0\\
1 & 1 & 0 & 0\\
0 & 0 & 0 & 0\\
0 & 0 & 0 & 0
\end{pmatrix}+\delta\begin{pmatrix}0 & 0 & y_{13} & y_{14}\\
0 & 0 & y_{13} & y_{14}\\
y_{31} & y_{32} & y_{33} & y_{34}\\
y_{41} & y_{42} & y_{43} & y_{44}
\end{pmatrix}+\delta^{2}\begin{pmatrix}0 & 0 & z_{13} & z_{14}\\
0 & 0 & z_{13} & z_{14}\\
z_{31} & z_{32} & z_{33} & z_{34}\\
z_{41} & z_{42} & z_{43} & z_{44}
\end{pmatrix},
\]
\noindent
by Corollary 3, we can apply the row operation $H_{[1,2]}$ and we are done.

\subsubsection{Case (ii)}
For case (ii), we have
\[
\rho_{3}^{k}(U)=\begin{pmatrix}\omega^{a} & \omega^{b} & \omega^{c} & \omega^{d}\\
\omega^{e} & \omega^{f} & \omega^{g} & \omega^{h}\\
\delta x_{31} & \delta x_{32} & \delta x_{33} & \delta x_{34}\\
\delta x_{41} & \delta x_{42} & \delta x_{43} & \delta x_{44}
\end{pmatrix}.
\]
\newtheorem{lemmacaseii}[lemma1]{Lemma}
\begin{lemmacaseii}
If $U$ is of case (ii), there always exists an $x\pmod{4}$ such that $a+x\equiv e\pmod{4}$,
$b+x\equiv f\pmod{4}$, $c+x\equiv g\pmod{4}$, and $d+x\equiv h\pmod{4}$.
\end{lemmacaseii}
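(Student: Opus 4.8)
The plan is to adapt the proof of Lemma~13, using orthogonality of the \emph{columns} of $U$ rather than its rows. Write $v_{jm}$ for the $(j,m)$-entry of $\delta^{k}U$, so that modulo $\delta^{3}$ the first two rows are $(\omega^{a},\omega^{b},\omega^{c},\omega^{d})$ and $(\omega^{e},\omega^{f},\omega^{g},\omega^{h})$, while rows three and four have the form $\delta x_{3m}$, $\delta x_{4m}$ (the corresponding rows of $\rho_{1}^{k}(U)$ vanish, so $\delta$ divides these entries by Remark~3). Throughout I will lean on the facts that $\delta^{\dagger}\delta\sim\delta^{2}$, that $\delta\mid 2$, that conjugation is trivial on $\mathbb{Z}[\omega]/(\delta)\cong\mathbb{F}_{2}$, and on Lemma~6.

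First I would pin down rows three and four modulo $\delta$. For each column $m$, unitarity gives $v_{1m}^{\dagger}v_{1m}+v_{2m}^{\dagger}v_{2m}+\delta^{\dagger}\delta(x_{3m}^{\dagger}x_{3m}+x_{4m}^{\dagger}x_{4m})=(\delta^{\dagger}\delta)^{k}$, which is $\equiv 0\pmod{\delta^{3}}$ because $k>1$. The first two summands are $\equiv 1$ and $2\equiv 0\pmod{\delta^{3}}$, so $\delta^{\dagger}\delta(x_{3m}^{\dagger}x_{3m}+x_{4m}^{\dagger}x_{4m})\equiv 0\pmod{\delta^{3}}$ and hence $x_{3m}^{\dagger}x_{3m}+x_{4m}^{\dagger}x_{4m}\equiv 0\pmod{\delta}$. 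Reducing modulo $\delta$ this says $x_{3m}+x_{4m}\equiv 0$ in $\mathbb{F}_{2}$, i.e.\ $x_{3m}\equiv x_{4m}\pmod{\delta}$, for each $m=1,2,3,4$.

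Next I would take the inner product of the first column with the second, third, and fourth columns. For columns $1$ and $2$ this reads $\omega^{b-a}+\omega^{f-e}+\delta^{\dagger}\delta(x_{31}^{\dagger}x_{32}+x_{41}^{\dagger}x_{42})\equiv 0\pmod{\delta^{3}}$; by the previous paragraph $x_{31}^{\dagger}x_{32}+x_{41}^{\dagger}x_{42}\equiv 0\pmod{\delta}$, so the last term vanishes modulo $\delta^{3}$ and we are left with $\omega^{b-a}+\omega^{f-e}\equiv 0\pmod{\delta^{3}}$, whence Lemma~6 gives $b-a\equiv f-e\pmod 4$. The pairs $(1,3)$ and $(1,4)$ give $c-a\equiv g-e$ and $d-a\equiv h-e\pmod 4$ in the same way. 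Taking $x\equiv e-a\pmod 4$ then gives $a+x\equiv e$, and rearranging the three congruences yields $b+x\equiv f$, $c+x\equiv g$, and $d+x\equiv h\pmod 4$, which is the claim.

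Given the lemmas already established, this is mostly bookkeeping; the one point requiring care is that the bottom two rows really contribute nothing modulo $\delta^{3}$ to each column inner product, which is exactly why the congruences $x_{3m}\equiv x_{4m}\pmod{\delta}$ must be harvested from the diagonal inner products before the off-diagonal ones are used. One could instead try the single relation coming from orthogonality of rows one and two, namely $\omega^{e-a}+\omega^{f-b}+\omega^{g-c}+\omega^{h-d}\equiv 0\pmod{\delta^{3}}$, but then Lemma~7 leaves cases (b) and (c) to be eliminated, which is strictly more work than the column approach above.
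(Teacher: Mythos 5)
Your proof is correct and follows essentially the same route as the paper: its proof of this lemma simply says to repeat the column-based argument of Lemma 13 (first extracting $x_{3m}\equiv x_{4m}\pmod{\delta}$ from the diagonal column inner products, then killing the $\delta^{\dagger}\delta$ terms in the off-diagonal inner products and applying Lemma 6), which is exactly what you carry out, just with the details written out in full. Your closing observation that the row-orthogonality route via Lemma 7 would be strictly more work also matches the paper's choice of argument.
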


\begin{proof}
By Remark 7, we can proceed as in the proof of Lemma 13, applied to the first and second columns, to show that $e-a\equiv f-b\pmod{4}$. Using a similar argument on the other columns of the matrix, it can be shown that $e-a\equiv f-b\equiv g-e\equiv h-d\pmod{4}$. Setting $x\equiv e-a\pmod{4}$, we are done.
\end{proof}

Now that we know such an $x$ exists, we have
\[
\rho_{3}^{k}(\omega^{x}_{[1]}U)=\begin{pmatrix}\omega^{e} & \omega^{f} & \omega^{g} & \omega^{h}\\
\omega^{e} & \omega^{f} & \omega^{g} & \omega^{h}\\
\delta x_{31} & \delta x_{32} & \delta x_{33} & \delta x_{34}\\
\delta x_{41} & \delta x_{42} & \delta x_{43} & \delta x_{44}
\end{pmatrix},
\]
\noindent
and by Corollary 3, we can apply the row operation $H_{[1,2]}$ and we are done.
\subsubsection{Case (iii)}
In case (iii), we have
\[
\rho_{3}^{k}(U)=\begin{pmatrix}\omega^{a} & \omega^{b} & \delta x_{13} & \delta x_{14}\\
\omega^{c} & \omega^{d} & \delta x_{23} & \delta x_{24}\\
\delta x_{31} & \delta x_{32} & \omega^{e} & \omega^{f}\\
\delta x_{41} & \delta x_{42} & \omega^{g} & \omega^{h}
\end{pmatrix}.
\]

\newtheorem{lemmacaseiv}[lemma1]{Lemma}
\begin{lemmacaseiv}
If $U$ is of case (iii), there exists an $x\pmod{4}$ such that $a+x\equiv c\pmod{4}$ and such that $b+x\equiv d\pmod{4}$.
\end{lemmacaseiv}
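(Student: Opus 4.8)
The plan is to notice that this lemma has exactly the same statement as Lemma 13 (case (i)), and that its proof carries over verbatim. The reason is Remark 7: the proof of Lemma 13 only inspected the first two columns of $\rho_3^k(U)$ and never used anything about columns three or four. In case (iii), the first two columns of $\rho_3^k(U)$ have precisely the same shape as in case (i) --- the four top-left entries are units $\omega^a,\omega^b,\omega^c,\omega^d$, and the four bottom-left entries $\delta x_{31},\delta x_{32},\delta x_{41},\delta x_{42}$ are divisible by $\delta$ (this divisibility is forced by the vanishing of the corresponding entries of $\rho_1^k(U)$, exactly as in case (i)). So the argument of Lemma 13 applies with no change.

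Spelled out, the steps are: (1) write $v_{jm}=\delta^k u_{jm}\in\mathbb{Z}[\omega]$ and use unitarity of $U$ to conclude that, since $k>1$, every inner product of two columns of $\rho_3^k(U)$ is $\equiv 0\pmod{\delta^3}$; (2) from the inner product of the first column with itself, using $2\equiv 0\pmod{\delta^3}$ and $\delta^{\dagger}\delta\sim\delta^2$, obtain $x_{31}^{\dagger}x_{31}+x_{41}^{\dagger}x_{41}\equiv 0\pmod{\delta}$, and since $\mathbb{Z}[\omega]/(\delta)\cong\mathbb{F}_2$ this forces $x_{31}\equiv x_{41}\pmod{\delta}$; symmetrically $x_{32}\equiv x_{42}\pmod{\delta}$ from the second column; (3) take the inner product of the first and second columns: the cross terms $\delta^{\dagger}\delta(x_{31}^{\dagger}x_{32}+x_{41}^{\dagger}x_{42})$ vanish modulo $\delta^3$, because $x_{31}^{\dagger}x_{32}\equiv x_{41}^{\dagger}x_{42}\pmod{\delta}$, leaving $\omega^{b-a}+\omega^{d-c}\equiv 0\pmod{\delta^3}$; (4) apply Lemma 6 to deduce $b-a\equiv d-c\pmod{4}$, equivalently $c-a\equiv d-b\pmod{4}$; (5) set $x\equiv c-a\pmod{4}$, so that $a+x\equiv c$ and $b+x\equiv d\pmod{4}$.

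There is essentially no obstacle here: the only thing that genuinely needs checking is the structural observation of the first paragraph --- that case (iii) presents the same picture in its first two columns as case (i) --- after which the proof is a direct appeal to the argument of Lemma 13 via Remark 7. The one mildly delicate point, inherited from that proof, is the cancellation of the cross terms in step (3), which works because one first reduces modulo $\delta$ (where conjugation is trivial and $t+t=0$) and only afterwards multiplies back by $\delta^{\dagger}\delta\sim\delta^2$; but this bookkeeping was already done in Lemma 13, so nothing new is required.
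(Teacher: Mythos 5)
Your proposal is correct and is exactly the paper's argument: the paper's proof of this lemma consists of the single sentence ``By Remark 7, we can proceed as in the proof of Lemma 13,'' where Remark 7 is precisely your structural observation that the proof of Lemma 13 only inspected the first two columns. Your spelled-out steps faithfully reproduce that earlier proof, so nothing differs beyond the level of detail.
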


\begin{proof}
By Remark 7, we can proceed as in the proof of Lemma 13.
\end{proof}

Now that we know such an $x$ exists, we have
\[
\rho_{3}^{k}(\omega^{x}_{[1]}U)=\begin{pmatrix}\omega^{c} & \omega^{d} & \delta y_{13} & \delta y_{14}\\
\omega^{c} & \omega^{d} & \delta x_{23} & \delta x_{24}\\
\delta x_{31} & \delta x_{32} & \omega^{e} & \omega^{f}\\
\delta x_{41} & \delta x_{42} & \omega^{g} & \omega^{h}
\end{pmatrix},
\]
where $y_{13}=\omega^{x}x_{13}$ and $y_{14}=\omega^{x}x_{14}$.

\newtheorem{lemmaivequal}[lemma1]{Lemma}
\begin{lemmaivequal}
We always have $\delta y_{13}\equiv\delta x_{23}\pmod{\delta^{2}}$ and $\delta y_{14}\equiv\delta x_{24}\pmod{\delta^{2}}$.
\end{lemmaivequal}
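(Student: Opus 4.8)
The plan is to read both congruences off the unitarity of $U':=\omega^{x}_{[1]}U$, using inner products among the columns of the residue matrix $N:=\rho^{k}_{3}(\omega^{x}_{[1]}U)$ displayed just above. Since $\omega^{x}_{[1]}$ is unitary and merely multiplies the first row by the unit $\omega^{x}$, the matrix $U'$ is unitary with least $\delta$-exponent $k$, and $\rho^{k}_{1}(U')$ still has the case (iii) shape. Complex conjugation fixes the ideal $(\delta^{n})$ (one checks $\delta^{\dagger}=\omega^{-1}\delta$, so $\delta^{\dagger}\sim\delta$), hence $(-)^{\dagger}$ descends to $\mathbb{Z}[\omega]/(\delta^{n})$ and commutes with $\rho_{n}$. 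Clearing the denominator $\delta^{k}$ as in the proof of Lemma 1, the orthogonality of two distinct columns $p\neq q$ of $U'$ and the unit norm of a single column become
\[
\sum_{j=1}^{4}N_{jp}^{\dagger}N_{jq}\equiv 0\pmod{\delta^{3}},\qquad \sum_{j=1}^{4}N_{jp}^{\dagger}N_{jp}\equiv 0\pmod{\delta^{3}},
\]
the second using $k\ge 2$ together with $\delta^{\dagger}\delta=\omega^{-1}\delta^{2}$ (so $(\delta^{\dagger}\delta)^{k}\equiv 0$) and $2\equiv 0\pmod{\delta^{3}}$ (as $\delta^{2}\sim\sqrt2$).

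First I would record an auxiliary relation from the norm of column $1$, whose entries in $N$ are $\omega^{c},\omega^{c},\delta x_{31},\delta x_{41}$: that congruence reads $2+\delta^{\dagger}\delta\,(x_{31}^{\dagger}x_{31}+x_{41}^{\dagger}x_{41})\equiv 0\pmod{\delta^{3}}$, and since $2\equiv 0$ and $\delta^{\dagger}\delta\sim\delta^{2}$ this forces $x_{31}^{\dagger}x_{31}+x_{41}^{\dagger}x_{41}\equiv 0\pmod{\delta}$. Reducing modulo $\delta$, where $\mathbb{Z}[\omega]/(\delta)=\{0,1\}$, $(-)^{\dagger}$ is the identity, and $t^{2}=t$, this becomes $\rho_{1}(x_{31})+\rho_{1}(x_{41})=0$, i.e. $\rho_{1}(x_{31})=\rho_{1}(x_{41})$.

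Next I would take the inner product of columns $1$ and $3$ of $N$, which gives
\[
\omega^{-c}\delta y_{13}+\omega^{-c}\delta x_{23}+\delta^{\dagger}x_{31}^{\dagger}\omega^{e}+\delta^{\dagger}x_{41}^{\dagger}\omega^{g}\equiv 0\pmod{\delta^{3}}.
\]
Using $\delta^{\dagger}=\omega^{-1}\delta$, every term is divisible by $\delta$; cancelling one factor of $\delta$ leaves a congruence modulo $\delta^{2}$, and reducing that modulo $\delta$ collapses every power of $\omega$ to $1$ and every conjugation to the identity, so it reads $\rho_{1}(y_{13})+\rho_{1}(x_{23})+\rho_{1}(x_{31})+\rho_{1}(x_{41})=0$ in $\{0,1\}$. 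By the auxiliary relation the last two terms cancel, whence $\rho_{1}(y_{13})=\rho_{1}(x_{23})$, that is, $y_{13}\equiv x_{23}\pmod{\delta}$, which is $\delta y_{13}\equiv\delta x_{23}\pmod{\delta^{2}}$. Running the identical computation on columns $1$ and $4$ (the auxiliary relation is the same) gives $\delta y_{14}\equiv\delta x_{24}\pmod{\delta^{2}}$.

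The part I expect to need care is the divisibility bookkeeping when passing from the columns-$1$-and-$3$ congruence to its reduction modulo $\delta$: one is tempted to cancel a second factor of $\delta$ and read off $y_{13}+x_{23}\equiv 0\pmod{\delta}$ at once, but that presumes the row-$3$/row-$4$ contribution $\delta^{\dagger}(x_{31}^{\dagger}\omega^{e}+x_{41}^{\dagger}\omega^{g})$ is divisible by $\delta^{2}$, which is false in general. What is true is only that its reduction modulo $\delta$ equals $\rho_{1}(x_{31})+\rho_{1}(x_{41})$, and this is exactly where the case (iii) shape (the factor $\delta$ throughout the lower-left $2\times2$ block) and the auxiliary relation are used. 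Finally, note the conclusion holds only modulo $\delta^{2}$; combined with the equality of the first two columns of $\rho^{k}_{3}(U')$, it makes the first two rows of $\rho^{k}_{2}(U')$ equal, which is the input needed for the $H_{[1,2]}$ step that follows.
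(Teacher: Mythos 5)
Your argument is correct and rests on the same underlying idea as the paper's---reading the congruence off unitarity relations among the columns of $\rho_{3}^{k}(\omega^{x}_{[1]}U)$---though the paper gets there more directly: it takes the inner product of the third column with itself, giving $\delta^{\dagger}\delta\,(y_{13}^{\dagger}y_{13}+x_{23}^{\dagger}x_{23})+2\equiv 0\pmod{\delta^{3}}$ and hence $y_{13}\equiv x_{23}\pmod{\delta}$ in one step, with no auxiliary relation needed. Your route via the norm of column $1$ together with the inner product of columns $1$ and $3$ is equally valid, and your cautionary remark about not cancelling a second factor of $\delta$ is well taken.
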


\begin{proof}
Taking the inner product of the third column and itself, we have $\delta^{\dagger}\delta y^{\dagger}_{13}y_{13}+\delta^{\dagger}\delta x^{\dagger}_{23}x_{23}+\omega^{-e}\omega^{e}+\omega^{-g}\omega^{g}\equiv0\pmod{\delta^{3}}$. This implies that $y^{\dagger}_{13}y_{13}\equiv x^{\dagger}_{23}x_{23}\pmod{\delta}$, so $y_{13}\equiv x_{23}\pmod{\delta}$, thus $\delta y_{13}\equiv\delta x_{23}\pmod{\delta^{2}}$. Similarly, $\delta y_{14}\equiv\delta x_{24}\pmod{\delta^{2}}$.
\end{proof}

By Corollary 2, applying an elementary Hadamard operation to the first two rows will not increase its $\delta$-exponent, thus applying the operation will reduce the matrix to case (i) or case (ii). Because both of these cases have been shown to satisfy Lemma 9 above, case (iii) also satisfies Lemma 9.
\subsubsection{Case (iv)}
In case (iv), we have
\[
\rho_{3}^{k}(U)=\begin{pmatrix}\omega^{a} & \omega^{b} & \delta x_{13} & \delta x_{14}\\
\omega^{c} & \omega^{d} & \delta x_{23} & \delta x_{24}\\
\omega^{e} & \omega^{f} & \omega^{g} & \omega^{h}\\
\omega^{j} & \omega^{l} & \omega^{m} & \omega^{p}
\end{pmatrix}.
\]
\newtheorem{lemmacaseiii}[lemma1]{Lemma}
\begin{lemmacaseiii}
If $U$ is of case (iv), there exists an $x\pmod{4}$ such that $a+x\equiv c\pmod{4}$ and such that $b+x\equiv d\pmod{4}$.
\end{lemmacaseiii}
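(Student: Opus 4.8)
The plan is to mirror the proof of Lemma~13, with the roles of rows and columns interchanged. In case (iv) it is the first two \emph{rows} of $\rho_{3}^{k}(U)$, not the first two columns, that have the shape ``two powers of $\omega$ followed by entries divisible by $\delta$''; columns $1$ and $2$ are entirely powers of $\omega$, so their mutual inner product would be a four-term sum of powers of $\omega$, to which Lemma~6 does not apply. So the computation that worked column-wise for case (i) will be run row-wise here. Throughout I use that $U$ is unitary with least $\delta$-exponent $k>1$, so that, working in $\mathbb{Z}[\omega]/(\delta^{3})$, the inner product of any row of $\rho_{3}^{k}(U)$ with itself and of any two distinct rows is $\equiv0\pmod{\delta^{3}}$; this is because $(\delta^{\dagger}\delta)^{k}\equiv0\pmod{\delta^{3}}$ for $k\ge2$, since $\delta^{\dagger}\delta=2+\sqrt2\sim\delta^{2}$ by Remark 1.

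First I would extract a parity fact about the last two entries of rows $1$ and $2$. Taking the inner product of row $1$ with itself gives
\[
\omega^{-a}\omega^{a}+\omega^{-b}\omega^{b}+\delta^{\dagger}\delta\,x_{13}^{\dagger}x_{13}+\delta^{\dagger}\delta\,x_{14}^{\dagger}x_{14}\equiv0\pmod{\delta^{3}}.
\]
Since $2\equiv0\pmod{\delta^{3}}$ and $\delta^{\dagger}\delta\sim\delta^{2}$, this forces $\delta\mid x_{13}^{\dagger}x_{13}+x_{14}^{\dagger}x_{14}$; working in $\mathbb{Z}[\omega]/(\delta)=\{0,1\}$, where $x^{\dagger}x\equiv\rho_{1}(x)$, this says $\rho_{1}(x_{13})=\rho_{1}(x_{14})$, i.e.\ $x_{13}\equiv x_{14}\pmod{\delta}$. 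The same argument applied to row $2$ gives $x_{23}\equiv x_{24}\pmod{\delta}$.

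Next I would take the inner product of rows $1$ and $2$:
\[
\omega^{-a}\omega^{c}+\omega^{-b}\omega^{d}+\delta^{\dagger}\delta\,x_{13}^{\dagger}x_{23}+\delta^{\dagger}\delta\,x_{14}^{\dagger}x_{24}\equiv0\pmod{\delta^{3}}.
\]
By the parities just established, $x_{13}^{\dagger}x_{23}\equiv x_{14}^{\dagger}x_{24}\pmod{\delta}$, so $x_{13}^{\dagger}x_{23}+x_{14}^{\dagger}x_{24}\equiv0\pmod{\delta}$ and hence $\delta^{\dagger}\delta(x_{13}^{\dagger}x_{23}+x_{14}^{\dagger}x_{24})\equiv0\pmod{\delta^{3}}$. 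Therefore $\omega^{c-a}+\omega^{d-b}\equiv0\pmod{\delta^{3}}$, and Lemma~6 yields $c-a\equiv d-b\pmod{4}$. Taking $x\equiv c-a\pmod{4}$ then gives $a+x\equiv c\pmod 4$ and, after rearranging $c-a\equiv d-b$, also $b+x\equiv d\pmod 4$, as required.

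The argument is essentially mechanical once one spots that it is the rows, not the columns, that carry the required structure. The only point demanding care is the bookkeeping that makes the cross term $\delta^{\dagger}\delta(x_{13}^{\dagger}x_{23}+x_{14}^{\dagger}x_{24})$ vanish modulo $\delta^{3}$; this is exactly why the preliminary parity step, using the diagonal inner products of rows $1$ and $2$, is needed, just as in Lemma~13.
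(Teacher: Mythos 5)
Your proof is correct and is exactly the argument the paper intends: its proof of this lemma simply says to run the proof of Lemma~13 ``applied to the first and second rows,'' and your write-up spells out precisely that row-wise computation (diagonal inner products to get $x_{13}\equiv x_{14}$ and $x_{23}\equiv x_{24}$ modulo $\delta$, then the cross inner product plus Lemma~6). Your remark on why columns $1$ and $2$ cannot be used here is also accurate.
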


\begin{proof}
Using an analogous proof to Lemma 13, applied to the first and second rows, we get $b-a \equiv d-c\pmod{4}$. We can then take $x\equiv c-a\pmod{4}$.
\end{proof}
\noindent
Now that we know such an $x$ exists, we have
\[
\rho_{3}^{k}(\omega^{x}_{[1]}U)=\begin{pmatrix}\omega^{c} & \omega^{d} & \delta y_{13} & \delta y_{14}\\
\omega^{c} & \omega^{d} & \delta x_{23} & \delta x_{24}\\
\omega^{e}& \omega^{f}& \omega^{g} & \omega^{h}\\
\omega^{j} & \omega^{l} & \omega^{m} & \omega^{p}
\end{pmatrix},
\]
\noindent
where $y_{13}=\omega^{x}x_{13}$ and $y_{14}=\omega^{x}x_{14}$.

\newtheorem{lemmaiiiequal}[lemma1]{Lemma}
\begin{lemmaiiiequal}
We always have $\delta y_{13}\equiv\delta x_{23}\pmod{\delta^{2}}$ and $\delta y_{14}\equiv\delta x_{24}\pmod{\delta^{2}}$.
\end{lemmaiiiequal}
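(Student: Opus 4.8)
The plan is to follow, almost verbatim, the argument used for the analogous equality lemma in case (iii): everything comes from the orthonormality of the columns of $\omega^{x}_{[1]}U$, read modulo $\delta^{3}$.

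First I would note that $\omega^{x}_{[1]}U$ is still unitary and still has least $\delta$-exponent $k>1$, since left-multiplying a single row by the unit $\omega^{x}$ affects neither unitarity nor divisibility. Set $V=\delta^{k}\,\omega^{x}_{[1]}U$, which has entries in $\mathbb{Z}[\omega]$. Orthonormality of the third column of $\omega^{x}_{[1]}U$, after clearing the denominator $\delta^{k}$, reads $\sum_{j} v_{j3}^{\dagger}v_{j3}=(\delta^{\dagger}\delta)^{k}$ in $\mathbb{Z}[\omega]$. Because $\delta^{\dagger}\delta\sim\delta^{2}$ and $k\ge 2$, the right-hand side is divisible by $\delta^{3}$; reducing modulo $\delta^{3}$ and reading the entries of the third column off the displayed matrix gives
\[
\delta^{\dagger}\delta\, y_{13}^{\dagger}y_{13}+\delta^{\dagger}\delta\, x_{23}^{\dagger}x_{23}+\omega^{-g}\omega^{g}+\omega^{-m}\omega^{m}\equiv 0\pmod{\delta^{3}}.
\]

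The only point that is not pure bookkeeping is that $\omega^{-g}\omega^{g}+\omega^{-m}\omega^{m}=2$ and $2\sim\delta^{4}$, so these two phase terms vanish modulo $\delta^{3}$. What is left is $\delta^{\dagger}\delta\,(y_{13}^{\dagger}y_{13}+x_{23}^{\dagger}x_{23})\equiv 0\pmod{\delta^{3}}$; cancelling $\delta^{\dagger}\delta\sim\delta^{2}$ (valid since $\mathbb{Z}[\omega]$ is a domain) yields $y_{13}^{\dagger}y_{13}+x_{23}^{\dagger}x_{23}\equiv 0\pmod{\delta}$. Finally, in $\mathbb{Z}[\omega]/(\delta)=\{0,1\}$ one has $t^{\dagger}\equiv t$ and $t^{2}\equiv t$, so this congruence is $y_{13}+x_{23}\equiv 0\pmod{\delta}$, i.e.\ $y_{13}\equiv x_{23}\pmod{\delta}$, which is exactly $\delta y_{13}\equiv\delta x_{23}\pmod{\delta^{2}}$. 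Running the identical computation on the fourth column (whose two lower entries $\omega^{h},\omega^{p}$ again contribute $\omega^{-h}\omega^{h}+\omega^{-p}\omega^{p}=2\equiv 0$) gives $\delta y_{14}\equiv\delta x_{24}\pmod{\delta^{2}}$.

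I do not anticipate a genuine obstacle here: the computation is structurally the same as in case (iii). The two things one must be careful about are the divisibilities $\delta^{3}\mid 2$ and $\delta^{3}\mid(\delta^{\dagger}\delta)^{k}$ (the latter using $k>1$), since together they are exactly what makes the pure-phase contributions in the column norms disappear; everything else is automorphism-invariance of the ideal $(\delta^{3})$, which is already used implicitly throughout.
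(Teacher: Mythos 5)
Your proof is correct and takes essentially the same route as the paper's: both compute the inner product of the third (respectively fourth) column of $\omega^{x}_{[1]}U$ with itself modulo $\delta^{3}$, observe that the two pure-phase terms contribute $2\equiv 0\pmod{\delta^{3}}$, and cancel $\delta^{\dagger}\delta\sim\delta^{2}$ to conclude $y_{13}\equiv x_{23}\pmod{\delta}$. You merely spell out the intermediate justifications (why $\delta^{3}\mid 2$, why $t^{\dagger}t\equiv t\pmod{\delta}$) that the paper's terser proof leaves implicit.
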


\begin{proof}
Taking the inner product of the third column and itself, we have $\delta^{\dagger}\delta y^{\dagger}_{13}y_{13}+\delta^{\dagger}\delta x^{\dagger}_{23}x_{23}+\omega^{-g}\omega^{g}+\omega^{-m}\omega^{m}\equiv0\pmod{\delta^{3}}$. This implies that $y^{\dagger}_{13}y_{13}\equiv x^{\dagger}_{23}x_{23}\pmod{\delta}$, so $y_{13}\equiv x_{23}\pmod{\delta}$, thus $\delta y_{13}\equiv\delta x_{23}\pmod{\delta^{2}}$. Similarly, $\delta y_{14}\equiv\delta x_{24}\pmod{\delta^{2}}$.
\end{proof}

If $\delta y_{13}\equiv\delta x_{23}\pmod{\delta^{3}}$ and $\delta y_{14}\equiv\delta x_{24}\pmod{\delta^{3}}$ then we are done, as applying an elementary Hadamard operation to the first two rows of $\omega^{x}_{[1]}U$ will reduce their $\delta$-exponent by Corollary 3 and will reduce $\omega^{x}_{[1]}U$ to case (i). Assume therefore, without loss of generality, that $\delta y_{13}\not\equiv\delta x_{23}\pmod{\delta^{3}}$. Since $\delta y_{13}+\delta x_{23}
\equiv 0\pmod{\delta^2}$ and $\delta y_{13}+ \delta x_{23} \not\equiv
0 \pmod{\delta^3}$, we therefore have $\delta y_{13}+ \delta x_{23}
\equiv \delta^2 \pmod{\delta^3}$.

\newtheorem{lemmaiii34}[lemma1]{Lemma}
\begin{lemmaiii34}
For $\omega^{x}_{[1]}U$, there exists some $y\pmod{4}$ such that $e+y\equiv j\pmod{4}$ and such that $f+y\equiv l\pmod{4}$.
\end{lemmaiii34}

\begin{proof}
Taking the inner product of the first and second columns, a we get $\omega^{-c}\omega^{d}+\omega^{-c}\omega^{d}+\omega^{-e}\omega^{f}+\omega^{-j}\omega^{l}\equiv0\pmod{\delta^{3}}$. This implies that $\omega^{-e}\omega^{f}+\omega^{-j}\omega^{l}\equiv0\pmod{\delta^{3}}$, so by Lemma 6 we can take $y\equiv j-e\pmod{4}$.
\end{proof}
\noindent
Now we have
\[
\rho_{3}^{k}(\omega^{x}_{[1]}\omega^{y}_{[3]}U)=\begin{pmatrix}\omega^{c} & \omega^{d} & \delta y_{13} & \delta y_{14}\\
\omega^{c} & \omega^{d} & \delta x_{23} & \delta x_{24}\\
\omega^{j}& \omega^{l}& \omega^{g+y} & \omega^{h+y}\\
\omega^{j} & \omega^{l} & \omega^{m} & \omega^{p}
\end{pmatrix}.
\]

\newtheorem{lemmaiii342}[lemma1]{Lemma}
\begin{lemmaiii342}
For $\omega^{x}_{[1]}\omega^{y}_{[3]}U$, we always have $g+y-m\equiv2\pmod{4}$ and $h+y-p\equiv2\pmod{4}$.
\end{lemmaiii342}

\begin{proof}
Taking the inner product of the second and third columns, we have $\omega^{-d}(\delta y_{13} + \delta x_{23})+\omega^{g+y-l}+\omega^{m-l}\equiv0\pmod{\delta^{3}}$.  Since $\delta y_{13} + \delta x_{23} \equiv \delta^2\pmod{\delta^{3}}$, and noting that $\omega^{-d}\delta^2\equiv \delta^2\pmod{\delta^{3}}$, the inner product becomes $\delta^2 + \omega^{g+y-l} + \omega^{m-l} \equiv 0 \pmod{\delta^{3}}$. We then have that $\omega^{-l}(\omega^{g+y}+\omega^{m})\equiv\delta^{2}\equiv1+\omega^{2}\pmod{\delta^{3}}$. Multiplying both sides through by $\omega^{l-m}$, we get that $\omega^{g+y-m}\equiv\omega^{2}\pmod{\delta^{3}}$. Then, by Lemma 6, we must have $g+y-m\equiv 2\pmod{4}$. A similar proof shows $h+y-p\equiv2\pmod{4}$.
\end{proof}

Because $g+y-m\equiv2\pmod{4}$ and $h+y-p\equiv2\pmod{4}$, the third and fourth rows of $\omega^{x}_{[1]}\omega^{y}_{[3]}U$ are congruent modulo $\delta^{2}$. By Corollary 2, applying an elementary Hadamard operation to the third and fourth rows of $\omega^{x}_{[1]}\omega^{y}_{[3]}U$ will not increase their $\delta$-exponent, thus applying the operation will reduce the matrix to case (i) or case (iii). Because both of these cases have been shown to satisfy Lemma 9 above, case (iv) also satisfies Lemma 9.

\subsubsection{Case (v)}
\tab In case (v), we have
\[
\rho_{3}^{k}(U)=\begin{pmatrix}\omega^{a} & \omega^{b} & \omega^{c} & \omega^{d}\\
\omega^{e} & \omega^{f} & \omega^{g} & \omega^{h}\\
\omega^{j} & \omega^{l} & \omega^{m} & \omega^{p}\\
\omega^{q} & \omega^{r} & \omega^{s} & \omega^{t}
\end{pmatrix}.
\]
Taking the inner products of the first and second rows, we get $\omega^{e-a}+\omega^{f-b}+\omega^{g-c}+\omega^{h-d}\equiv0\pmod{\delta^{3}}$. Lemma 7 gives us three possibilities for the values of these exponents.\\
\textbf{Case 1.} If $e-a\equiv f-b\equiv g-c\equiv h-d\pmod{4}$, we can set $x\equiv e-a\pmod{4}$. Then the first two rows of $\omega^{x}_{[1]}U$ are equivalent modulo $\delta^{3}$, so we are done by Corollary 3.\\
\textbf{Case 2.} Assume that $e-a$, $f-b$, $g-c$, and $h-d$ are distinct modulo $4$. Then, without loss of generality, we have
\[
\rho_{3}^{k}(U)=\begin{pmatrix}1 & 1 & 1 & 1\\
1 & \omega & \omega^{2} & \omega^{3}\\
1 & \omega^{l} & \omega^{m} & \omega^{p}\\
\omega^{q} & \omega^{r} & \omega^{s} & \omega^{t}
\end{pmatrix}.
\]

Taking the inner product of the first and third rows, we get $1+\omega^{l}+\omega^{m}+\omega^{p}\equiv0\pmod{\delta^{3}}$. Taking the inner product of the second and third rows, we get $1+\omega^{l-1}+\omega^{m-2}+\omega^{p-3}\equiv0\pmod{\delta^{3}}$. By Lemma 7, there are 3 options for the values of $l$, $m$, and $p$.\\

\textbf{Case 2.1.} Assume that $0$, $l$, $m$, and $p$ are distinct modulo $4$. Then, there are 6 possible permutations of $l$, $m$, and $p$. Note that this covers case (c) of Lemma 7.
\begin{itemize}
\item{If $(l,\:m,\:p)\equiv (1,\:2,\:3)\pmod{4}$, then the second and third rows of $U$ are congruent modulo $\delta^{3}$. By Corollary 3, applying an elementary Hadamard operation reduces the $\delta$-exponent of the rows and this case reduces to case (ii).}
\item{If $(l,\:m,\:p)\equiv (1,\:3,\:2)\pmod{4}$, then the inner product of the second and third rows becomes $1+1+\omega+\omega^{-1}\not\equiv0\pmod{\delta^{3}}$, thus this cannot occur.}
\item{If $(l,\:m,\:p)\equiv (2,\:1,\:3)\pmod{4}$, then the inner product of the second and third rows becomes $1+\omega+\omega^{-1}+1\not\equiv0\pmod{\delta^{3}}$, thus this cannot occur.}
\item{If $(l,\:m,\:p)\equiv (2,\:3,\:1)\pmod{4}$, then the inner product of the second and third rows becomes $1+\omega+\omega+\omega^{-2}\not\equiv0\pmod{\delta^{3}}$, thus this cannot occur.}
\item{If $(l,\:m,\:p)\equiv (3,\:1,\:2)\pmod{4}$, then the inner product of the second and third rows becomes $1+\omega^{2}+\omega^{-1}+\omega^{-1}\not\equiv0\pmod{\delta^{3}}$, thus this cannot occur.}
\item{If $(l,\:m,\:p)\equiv (3,\:2,\:1)\pmod{4}$, then the  second and third rows of $U$ are congruent modulo $\delta^{2}$ and by Corollary 2 we can apply an elementary Hadamard gate to these rows which will reduce this case to either case (ii) or case (iv) without increasing $k$.}
\end{itemize}

\textbf{Case 2.2.} Assume that at least one of $l$, $m$, and $p$ is congruent to $0$ modulo $4$, and the other two must be congruent modulo $4$. Note that this covers both cases (a) and (b) of Lemma 7.
\newtheorem{casevequal}[lemma1]{Lemma}
\begin{casevequal}
Two rows of $U$ are congruent modulo $\delta^{2}$.
\end{casevequal}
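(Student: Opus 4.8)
The goal of this case is to produce two rows of $\rho_{3}^{k}(U)$ that are congruent modulo $\delta^{2}$; then, exactly as in cases (iii) and (iv), Corollary 2 lets us apply an elementary Hadamard operation to those two rows without increasing $k$, which sends $U$ into one of the already-treated cases and hence shows case (v) satisfies Lemma 9. So everything reduces to the claim, and I would in fact prove the sharper statement that the third row of $\rho_{3}^{k}(U)$ is congruent mod $\delta^{2}$ to the first row or to the second.

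First I would record how the quotient $\mathbb{Z}[\omega]/(\delta^{3})\to\mathbb{Z}[\omega]/(\delta^{2})$ behaves on powers of $\omega$: discarding the $\delta^{2}$-coefficient in the expansions of Remark 6 gives $\omega^{j}\equiv1\pmod{\delta^{2}}$ for $j$ even and $\omega^{j}\equiv\omega\pmod{\delta^{2}}$ for $j$ odd. Since in case (v) every entry of $\rho_{3}^{k}(U)$ is a power of $\omega$, each row of $\rho_{3}^{k}(U)$ is determined mod $\delta^{2}$ by the parity vector of its four exponents, and two rows agree mod $\delta^{2}$ precisely when their parity vectors agree. After the Case 2 normalization, rows $1$, $2$, $3$ carry exponent vectors $(0,0,0,0)$, $(0,1,2,3)$, $(0,l,m,p)$, i.e. parity vectors $(0,0,0,0)$, $(0,1,0,1)$, $(0,\bar{l},\bar{m},\bar{p})$ where $\bar{\cdot}$ is reduction mod $2$. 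So it suffices to prove $(\bar{l},\bar{m},\bar{p})=(0,0,0)$ or $(\bar{l},\bar{m},\bar{p})=(1,0,1)$.

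Now unwind the Case 2.2 hypothesis. Applying Lemma 7 to the relation $1+\omega^{l}+\omega^{m}+\omega^{p}\equiv0\pmod{\delta^{3}}$ (the first-row/third-row inner product, already derived at the start of Case 2) puts $0,l,m,p$ into case (a) or case (b). Case (a) gives $l\equiv m\equiv p\equiv0\pmod4$, so the row-$3$ parity vector is $(0,0,0)$ and we are done (indeed rows $1$ and $3$ are then congruent mod $\delta^{3}$). In case (b) the four residues $0,l,m,p$ split into two equal-valued pairs, one of which contains $0$; moreover, if two of $l,m,p$ were $\equiv0\pmod4$ then, plugging this into the same relation and using $2\equiv0$ together with Lemma 6, the third would be forced $\equiv0$ as well, contradicting the condition $x\not\equiv z$ of case (b). Hence exactly one of $l,m,p$ is $\equiv0\pmod4$ and the remaining two are congruent mod $4$ to each other but not to $0$, giving three sub-cases according to which of $l,m,p$ vanishes mod $4$. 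In each sub-case I would bring in the relation $1+\omega^{l-1}+\omega^{m-2}+\omega^{p-3}\equiv0\pmod{\delta^{3}}$ (the second-row/third-row inner product) and evaluate it in the eight-element ring $\mathbb{Z}[\omega]/(\delta^{3})$, using $2\equiv0$, $\omega^{4}\equiv1$, and $1+\omega^{2}\equiv\delta^{2}$: for each candidate value of the free exponent mod $4$ the sum either collapses to $1+\omega^{2}\equiv\delta^{2}$ or to $\omega+\omega^{3}$ --- neither $\equiv0\pmod{\delta^{3}}$ (the first by Remark 6, the second by Lemma 6) --- so that value is impossible, or else the sum genuinely vanishes, and a short direct check shows the row-$3$ parity vector is $(0,0,0)$ or $(1,0,1)$.

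The bulk of the work is this last case analysis: three pairings times three residues, each a one-line computation in $\mathbb{Z}[\omega]/(\delta^{3})$. The only non-mechanical point is that the first-row/third-row relation by itself does not suffice --- alone it still permits the parity patterns $(1,1,0)$ and $(0,1,1)$ for $(\bar{l},\bar{m},\bar{p})$, which match neither row $1$ nor row $2$ --- and that it is precisely the second-row/third-row relation that excludes these two patterns. Once $(\bar{l},\bar{m},\bar{p})\in\{(0,0,0),(1,0,1)\}$, rows $1$ and $3$ (respectively rows $2$ and $3$) of $\rho_{3}^{k}(U)$ are congruent mod $\delta^{2}$, which is the assertion of this lemma.
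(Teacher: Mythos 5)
Your proposal is correct and follows essentially the same route as the paper: it uses the same two inner-product relations (rows $1$--$3$ and rows $2$--$3$), the same three-way split according to which of $l$, $m$, $p$ is $\equiv 0 \pmod 4$, and the same residue checks ruling out the impossible odd cases, concluding that row $3$ matches row $1$ or row $2$ modulo $\delta^{2}$. The ``parity vector'' packaging is a tidy way to organize the paper's case analysis, but the substance is identical.
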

\begin{proof}
\begin{itemize}
\item{Assume that $l\equiv0\pmod{4}$ and $m\equiv p$. Then, if $m\equiv p\equiv 0$ or $m\equiv p\equiv 2$ we are done, as the first and third rows will be equivalent modulo $\delta^{2}$. Assume that $m\equiv p\equiv 1$. Then the inner product of the second and third rows will be $1+\omega^{-1}+\omega^{-1}+\omega^{-2}\equiv1+\omega^{2}\not\equiv0\pmod{\delta^{3}}$, thus this cannot occur. Next, assume $m\equiv p\equiv 3$. Then the inner product of the second and third row will be $1+\omega^{-1}+\omega+1\equiv\omega+\omega^{3}\not\equiv0\pmod{\delta^{3}}$, thus this cannot occur.}
\item{Now, assume that $m\equiv0\pmod{4}$ and $l\equiv p$. If $l\equiv p$ is odd, the third row will be equivalent to the second row modulo $\delta^{2}$. If $l\equiv p$ is even, the first and third rows will be equivalent modulo $\delta^{2}$.}
\item{Lastly, assume that $p\equiv0\pmod{4}$ and $l\equiv m$. If $l\equiv m\equiv 0$ or $l\equiv m\equiv 2$ we are done, as the first and third rows will be congruent modulo $\delta^{2}$. Assume that $l\equiv m\equiv 1$. Then the inner product of the second and third rows will be $1+1+\omega^{-1}+\omega^{-3}\equiv\omega+\omega^{3}\not\equiv0\pmod{\delta^{3}}$, thus this cannot occur. Next, assume that $l\equiv m\equiv 3$. Then the inner product of the second and third row will be $1+\omega^{2}+\omega+\omega^{-3}\equiv1+\omega^{2}\not\equiv0\pmod{\delta^{3}}$, thus this cannot occur, thus we must always have two rows of $U$ congruent modulo $\delta^{2}$.}
\end{itemize}
\end{proof}

By the above lemma and Corollary 2, there always exist two rows of $U$ to which we can apply an elementary Hadamard operation without increasing $k$. This will reduce $U$ to an instance of case (ii) or (iv), which have already been shown to satisfy Lemma 9. By Lemma 7, we have satisfied all possibilites for case 2.\\
\textbf{Case 3.} The last case given by Lemma 7 occurs without loss of generality when $e-a\equiv f-b\pmod{4}$, $g-c\equiv h-d\pmod{4}$, and $e-a\not\equiv g-c\pmod{4}$. We can let $a\equiv b\equiv c\equiv d\equiv e\equiv f\equiv j\equiv 0\pmod{4}$ because we know such row and column operations exist. This gives us:
\[
\rho_{3}^{k}(U)=\begin{pmatrix}1 & 1 & 1 & 1\\
1 & 1 & \omega^{g} & \omega^{g}\\
1 & \omega^{l} & \omega^{m} & \omega^{p}\\
\omega^{q} & \omega^{r} & \omega^{s} & \omega^{t}
\end{pmatrix}
\]
\textbf{Case 3.1.} If $g$ is even modulo $4$, we are done, as the first two rows of U will be congruent modulo $\delta^{2}$. By Corollary 2, applying an elementary Hadamard operation will not increase the least $\delta$-exponent $k$ of $U$ and this case will reduce to an instance of case (ii) or (iv).\\
\textbf{Case 3.2.} Assume $g\equiv 1\pmod{4}$. Then, we have
\[
\rho_{3}^{k}(U)=\begin{pmatrix}1 & 1 & 1 & 1\\
1 & 1 & \omega & \omega\\
1 & \omega^{l} & \omega^{m} & \omega^{p}\\
\omega^{q} & \omega^{r} & \omega^{s} & \omega^{t}
\end{pmatrix}.
\]
\newtheorem{lemmavlast1}[lemma1]{Lemma}
\begin{lemmavlast1}
Two rows of $U$ are congruent modulo $\delta^{2}$.
\end{lemmavlast1}
\begin{proof}
Taking the inner product of the first and third rows, we get $1+\omega^{l}+\omega^{m}+\omega^{p}\equiv0\pmod{\delta^{3}}$. Taking the inner product of the second and third rows, we get $1+\omega^{l}+\omega^{m-1}+\omega^{p-1}\equiv0\pmod{\delta^{3}}$. By Lemma 7, either $0$, $l$, $m$, and $p$ are distinct modulo $4$, or at least one of $l$, $m$, and $p$ is congruent to $0$ and the other two are congruent to each other modulo $4$.\\
\textbf{Case 3.2.1.} Assume $0$, $l$, $m$, and $p$ are distinct modulo $4$.

\begin{itemize}
\item{If $(l,\:m,\:p)\equiv (1,\:2,\:3)\pmod{4}$, then the inner product of the second and third rows becomes $1+\omega+\omega+\omega^{2}\not\equiv0\pmod{\delta^{3}}$, thus this cannot occur.}
\item{If $(l,\:m,\:p)\equiv (1,\:3,\:2)\pmod{4}$, then the inner product of the second and third rows becomes $1+\omega+\omega^{2}+\omega\not\equiv0\pmod{\delta^{3}}$, thus this cannot occur.}
\item{If $(l,\:m,\:p)\equiv (2,\:1,\:3)\pmod{4}$, then the second and third rows of $U$ are congruent modulo $\delta^{2}$, and applying an elementary Hadamard operation to them will reduce this case to either case (ii) or case (iv) without increasing $k$ by Corollary 2.}
\item{If $(l,\:m,\:p)\equiv (2,\:3,\:1)\pmod{4}$, then the second and third rows of $U$ are congruent modulo $\delta^{2}$, and applying an elementary Hadamard operation to them will reduce this case to either case (ii) or case (iv) without increasing $k$ by Corollary 2.}
\item{If $(l,\:m,\:p)\equiv (3,\:1,\:2)\pmod{4}$, then the inner product of the second and third rows becomes $1+\omega^{3}+1+\omega\not\equiv0\pmod{\delta^{3}}$, thus this cannot occur.}
\item{If $(l,\:m,\:p)\equiv (3,\:2,\:1)\pmod{4}$, then the inner product of the second and third rows becomes $1+\omega^{3}+\omega+1\not\equiv0\pmod{\delta^{3}}$, thus this cannot occur.}
\end{itemize}

\textbf{Case 3.2.2.} The other case occurs when at least one of $l$, $m$, and $p$ is congruent to $0$ and the other two are congruent to each other modulo $4$.
\begin{itemize}
\item{Assume $l\equiv0\pmod{4}$ and $m\equiv p\pmod{4}$. Then, if $m\equiv p\pmod{4}$ is even, the first and third rows of $U$ are congruent modulo $\delta^{2}$, and if $m\equiv p\pmod{4}$ is odd, the second and third rows of $U$ are congruent modulo $\delta^{2}$.}
\item{Assume that $m\equiv0\pmod{4}$ and $l\equiv p\pmod{4}$. Then, if $l\equiv p \pmod{4}$ is even, we are done, because the first and third rows are congruent modulo $\delta^{2}$. If $l\equiv p\equiv1\pmod{4}$, then the inner product of the second and third rows will be $1+\omega+\omega^{-1}+1\equiv\omega+\omega^{3}\not\equiv0\pmod{\delta^{3}}$, thus this cannot occur. If $l\equiv p\equiv3\pmod{4}$, then the inner product of the second and third rows will be $1+\omega^{3}+\omega^{-1}+\omega^{2}\equiv1+\omega^{2}\not\equiv0\pmod{\delta^{3}}$, thus this case cannot occur.}
\item{Lastly, assume $p\equiv0\pmod{4}$ and $l\equiv m\pmod{4}$. This case is analogous to the previous case where $m\equiv0\pmod{4}$.}
\end{itemize}
\end{proof}

Because two rows of $U$ are always congruent modulo $\delta^{2}$, by Corollary 2 we can apply an elementary Hadamard gate to these rows without increasing $k$. Applying this operation to $U$ will reduce this case to case (ii) or case (iv), which have been shown to satisfy Lemma 9.\\
\textbf{Case 3.3.} The last instance of this case occurs when $g\equiv 3\pmod{4}$. Then, we have
\[
\rho_{3}^{k}(U\omega_{[3]}\omega_{[4]})=\begin{pmatrix}1 & 1 & \omega & \omega\\
1 & 1 & 1 & 1\\
1 & \omega^{l} & \omega^{m+1} & \omega^{p+1}\\
\omega^{q} & \omega^{r} & \omega^{s+1} & \omega^{t+1}
\end{pmatrix},
\]
and the proof becomes analogous to the proof for case 3.2.

\subsection{Statement of the main theorem}
\newtheorem{main}{Theorem}
\begin{main}
Given a unitary $n\times n$ matrix $U$ with $n\le 4$ and entries in $\mathbb{D}[\omega]$ and least $\delta$-exponent $k$, then $U$ can be written as a product of at most $Nk+M$ elementary operators, where $M$ and $N$ are the bounds described in Lemmas 8 and 9 respectively.
\end{main}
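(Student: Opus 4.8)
The plan is to prove Theorem 1 by strong induction on the least $\delta$-exponent $k$ of $U$, with Lemma 8 serving as the base case, Lemma 9 as the inductive step, and Corollary 1 ruling out the degenerate value $k=1$. Before starting the induction I would note that the inverse of every elementary operator is again a product of boundedly many elementary operators: $H_{[j,m]}$ and $X_{[j,m]}$ are involutions, while $\omega_{[j]}^{-1}=\omega_{[j]}^{7}$. This makes it legitimate to move operators from one side of $U$ to the other, which is all the induction needs; one may simply absorb the factor into $N$ and $M$, or regard the elementary set as closed under inverses.

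The base case is $k=0$: Lemma 8 supplies elementary operators $S_{1},\dots,S_{m}$ with $S_{1}\cdots S_{m}U=I$ and $m\le M$, so $U=S_{m}^{-1}\cdots S_{1}^{-1}$ is a product of at most $M=N\cdot 0+M$ elementary operators. The value $k=1$ never occurs for a unitary matrix over $\mathbb{D}[\omega]$ by Corollary 1, so that case is vacuous. For the inductive step I would assume $k>1$ and that the theorem already holds for every unitary matrix over $\mathbb{D}[\omega]$ whose least $\delta$-exponent is smaller than $k$. Lemma 9 then yields elementary operators $S_{1},\dots,S_{m},T_{1},\dots,T_{j}$ with $m+j\le N$ such that $U'=S_{1}\cdots S_{m}U T_{1}\cdots T_{j}$ is again unitary with entries in $\mathbb{D}[\omega]$ and has least $\delta$-exponent $k'<k$. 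Applying the induction hypothesis to $U'$ writes it as a product of at most $Nk'+M$ elementary operators, whence $U=S_{m}^{-1}\cdots S_{1}^{-1}\,U'\,T_{j}^{-1}\cdots T_{1}^{-1}$ is a product of at most $(m+j)+Nk'+M\le N+N(k-1)+M=Nk+M$ elementary operators, which closes the induction.

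I do not expect a serious obstacle here, since the mathematical content is carried entirely by Lemmas 8 and 9 and what remains is a counting argument. The points needing attention are: invoking Corollary 1 so that the case $k=1$ --- where the hypothesis $k>1$ of Lemma 9 fails --- is empty; checking that $U'$ still meets the hypotheses of the theorem, which is clear because a product of unitaries with entries in $\mathbb{D}[\omega]$ is again unitary with entries in $\mathbb{D}[\omega]$; and verifying that the estimate $(m+j)+Nk'+M\le Nk+M$ really does go through, which works precisely because $m+j\le N$ and $k'\le k-1$.
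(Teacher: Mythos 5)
Your proposal is correct and follows essentially the same route as the paper's own proof: induction on $k$ with Lemma 8 as the base case, Corollary 1 to dispose of $k=1$, and Lemma 9 for the inductive step, concluding with $U=S_{m}^{-1}\cdots S_{1}^{-1}U'T_{j}^{-1}\cdots T_{1}^{-1}$. Your added remarks --- that inverses of elementary operators are again (products of) elementary operators, and the explicit check $(m+j)+Nk'+M\le Nk+M$ --- are details the paper leaves implicit, but they do not change the argument.
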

\begin{proof}
By induction on $k$. If $k=0$, by Lemma 8 there exists a sequence $S_{1},\ldots,S_{m}$ of elementary operators, of length at most $M$, such that $S_{1}\ldots S_{m}U=I$. We can then write $U=S^{-1}_{m}\ldots S^{-1}_{1}$. By Corollary 1, $k=1$ does not occur. For $k>1$, by Lemma 9, there exist two sequences $S_{1},\ldots,S_{m}$ and $T_{1},\ldots,T_{j}$ of elementary operators, of total length at most $N$, such that $U'=S_{1}\ldots S_{m}UT_{1}\ldots T_{j}$ has least $\delta$-exponent $k'<k$. Applying the induction hypothesis, we can then write $U'$ as a product $R_{1}\ldots R_{l}$ of elementary operators of length at most $Nk'+M$. The claim then follows by writing
\[
U=S^{-1}_{m}\ldots S^{-1}_{1}R_{1}\ldots R_{l}T^{-1}_{j}\ldots T^{-1}_{1}.
\]
\end{proof}

\section{Synthesis of $1$- and $2$-qubit Clifford+$T$ circuits}

\subsection{Main synthesis result}
Let $U$ be a $2^{n} \times 2^{n}$ unitary matrix with $n \le 2$ and entries in $\mathbb{D}[\omega]$. By Theorem 1, $U$ can be decomposed into elementary matrices of types $H_{[j,m]}$, $X_{[j,m]}$, and $\omega_{[j]}$. It is known that these matrices can be further decomposed into controlled-not gates and multiply-controlled $X$, $H$, $T$, and $\omega$-gates, for example using Gray codes [1, Sec 4.5.2], and all of these gates have well known representations in Clifford+$T$ circuits.

\subsection{One ancilla is sufficient}
It remains to show that $U$ can be represented as a circuit with at most one ancilla. It is known that for $n>1$, one ancilla is sometimes necessary [4].\\
\tab Giles and Selinger [2] showed that one could implement a multiply-controlled $X$-gate, $H$-gate, and $T$-gate using one ancilla in the following way:
\[
  \m{\begin{qcircuit}[scale=0.47]
    \grid{2}{0.5,2,3};
    \draw(0.4,2.7) node[anchor=center]{$\vdots$};
    \controlled{\gate{$X$}}{1,0.5}{2,3};
    \draw(1.6,2.7) node[anchor=center]{$\vdots$};
  \end{qcircuit}}
  =\!\!
  \m{\begin{qcircuit}[scale=0.47]
    \grid{8.5}{0,2,3};
    \gridx{1}{7.5}{1};
    \draw(1,2.7) node[anchor=center]{$\vdots$};
    \init{$0$}{1,1};
    \controlled{\widegate{$iX$}{.75}}{2.5,1}{2,3};
    \controlled{\gate{$X$}}{4.25,0}{1};
    \controlled{\widegate{$-iX$}{.75}}{6,1}{2,3};
    \term{$0$}{7.5,1};
    \draw(7.5,2.7) node[anchor=center]{$\vdots$};
  \end{qcircuit}}
\]\[
  \m{\begin{qcircuit}[scale=0.47]
    \grid{2}{0.5,2,3};
    \draw(0.4,2.7) node[anchor=center]{$\vdots$};
    \controlled{\gate{$H$}}{1,0.5}{2,3};
    \draw(1.6,2.7) node[anchor=center]{$\vdots$};
  \end{qcircuit}}
  =\!\!
  \m{\begin{qcircuit}[scale=0.47]
    \grid{8.5}{0,2,3};
    \gridx{1}{7.5}{1};
    \draw(1,2.7) node[anchor=center]{$\vdots$};
    \init{$0$}{1,1};
    \controlled{\widegate{$iX$}{.75}}{2.5,1}{2,3};
    \controlled{\gate{$H$}}{4.25,0}{1};
    \controlled{\widegate{$-iX$}{.75}}{6,1}{2,3};
    \term{$0$}{7.5,1};
    \draw(7.5,2.7) node[anchor=center]{$\vdots$};
  \end{qcircuit}}
\]\[
  \m{\begin{qcircuit}[scale=0.47]
    \grid{2}{0.5,2,3};
    \draw(0.4,2.7) node[anchor=center]{$\vdots$};
    \controlled{\gate{$T$}}{1,0.5}{2,3};
    \draw(1.6,2.7) node[anchor=center]{$\vdots$};
  \end{qcircuit}}
  =\!\!\!
  \m{\begin{qcircuit}[scale=0.47]
    \grid{8.5}{1,2,3};
    \gridx{1}{7.5}{0};
    \draw(1,2.7) node[anchor=center]{$\vdots$};
    \init{$0$}{1,0};
    \controlled{\widegate{$iX$}{.75}}{2.5,0}{1,2,3};
    \gate{$T$}{4.25,0}{1};
    \controlled{\widegate{$-iX$}{.75}}{6,0}{1,2,3};
    \term{$0$.}{7.5,0};
    \draw(7.5,2.7) node[anchor=center]{$\vdots$};
  \end{qcircuit}}
\]

Combining this result with Theorem 1 clearly shows that $U$ can be implemented as a circuit using at most one ancilla.

\subsection{Complexity}

The proof of Theorem 1 for the $2^{n} \times 2^{n}$ case with $0\le n \le 2$ above gives an algorithm for synthesizing a Clifford+$T$ circuit with ancillas from a given operator $U$. The algorithm yields a circuit of $O(k)$ gates, where $k$ is the least $\delta$-exponent of $U$.\\
\tab Because we only prove this algorithm for $0\le n \le 2$, it does not make sense to express the complexity of the algorithm in terms of $n$.

\section{Conclusion and future work}
In conclusion, we have presented an efficient algorithm for the exact synthesis of Clifford+$T$ circuits from given $1$- and $2$-qubit operators using decomposition into elementary operators.\\
\tab This paper only gives a proof for the algorithm in the $1$ and $2$ qubit cases. Future work will involve extending this to the $n$ qubit case. If our methods can be extended to $n>2$, one may hope to prove that the complexity scales as $O(b^{n}nk)$, similarly to the algorithm presented in [3]. It is an open question whether this is the case, and whether $b<4$.\\
\tab It would also be interesting to carry out a practical comparison between our algorithm and that of [3], to determine the constants hidden in the big-O notation of each algorithm's complexity. Such a comparison could be meaningful even for fixed $n$.
\section{Acknowledgements}
I wish to thank my supervisor, Dr. Peter Selinger for his guidance and help throughout this project. I would also like to thank my family, friends, and colleagues who made my time at Dalhousie University a memorable experience.


\end{document}